\DeclareMathOperator*{\argmax}{argmax}
\newcommand{\nn}{\nonumber}
\newcommand{\mc}{\mathcal}
\newcommand{\mbb}{\mathbb}
\newcommand{\beq}{\begin{equation}}
\newcommand{\eeq}{\end{equation}}
\newcommand{\yue}[1]{\ifthenelse{\boolean{showcomments}}
{ \textcolor{red}{(Yue says:  #1)}}{}}
\newcommand{\BEAS}{\begin{eqnarray*}}
\newcommand{\EEAS}{\end{eqnarray*}}
\newcommand{\BEQ}{\begin{equation}}
\newcommand{\EEQ}{\end{equation}}
\newcommand{\BIT}{\begin{itemize}}
\newcommand{\EIT}{\end{itemize}}
\newcommand{\expec}{\mathbb{E}}
\newtheorem{theorem}{Theorem}
\newtheorem{lemma}{Lemma}
\newtheorem{corollary}{Corollary}
\newtheorem{RK}{Remark}
\tikzset{circle split part fill/.style  args={#1,#2}{%
 alias=tmp@name, 
  postaction={%
    insert path={
     \pgfextra{%
     \pgfpointdiff{\pgfpointanchor{\pgf@node@name}{center}}%
                  {\pgfpointanchor{\pgf@node@name}{east}}%
     \pgfmathsetmacro\insiderad{\pgf@x}
      \fill[#1] (\pgf@node@name.base) ([xshift=-\pgflinewidth]\pgf@node@name.east) arc
                          (0:180:\insiderad-\pgflinewidth)--cycle;
      \fill[#2] (\pgf@node@name.base) ([xshift=\pgflinewidth]\pgf@node@name.west)  arc
                           (180:360:\insiderad-\pgflinewidth)--cycle;            
         }}}}}
\begin{document}

\title{Indirect Mechanism Design for Efficient and Stable Renewable Energy Aggregation}


\author{Hossein~Khazaei,~\IEEEmembership{Student Member, IEEE,}
        and~Yue~Zhao,~\IEEEmembership{Member, IEEE}
\thanks{H. Khazaei and Y. Zhao are with the Dept. of Electrical and Computer Engineering, Stony Brook University, Stony Brook, NY, 11794 USA (e-mails: \{hossein.khazaei, yue.zhao.2\}@stonybrook.edu).}%
}

\maketitle

\begin{abstract}
Mechanism design is studied for aggregating renewable power producers (RPPs) in a two-settlement power market. 
Employing an \emph{indirect} mechanism design framework, a payoff allocation mechanism (PAM) is derived from the competitive equilibrium (CE) of a specially formulated market with transferrable payoff. 
Given the designed mechanism, the strategic behaviors of the participating RPPs entail a non-cooperative game: 
It is proven that 
a unique pure Nash equilibrium (NE) exists among the RPPs, for which a closed-form expression is found. Moreover, it is proven that the designed mechanism achieves a number of key desirable properties at the NE: these include efficiency (i.e., an ideal ``Price of Anarchy'' of one), stability (i.e., ``in the core'' from a coalitional game theoretic perspective), and no collusion. In addition, it is shown that a set of desirable ``ex-post'' properties are also achieved by the designed mechanism. Extensive simulations are conducted and corroborate the theoretical results. 
\end{abstract}

\begin{IEEEkeywords}
Cost allocation, Nash equilibrium, mechanism design, coalitional game, renewable energy, electricity market
\end{IEEEkeywords} 

\section{Introduction}


Renewable energies play a central role in achieving a sustainable energy future. However, renewable energies such as wind and solar power are inherently non-dispatchable, and yet highly uncertain and variable. As a result, integrating renewable energies into power systems to serve loads raises significant reliability and efficiency challenges \cite{NERC2009Report, Pinson2013}. 
A variety of approaches have been proposed to compensate for the uncertainty of renewable energies, such as improving renewable power generation forecast \cite{Pinson2013}, employing better generation dispatch methods \cite{varaiya2011smart}, 
energy storage deployment and control \cite{Bitar2011, Harsha14, CRZJG16}, 
and demand response programs \cite{conejo2010real, li2011optimal, comden2017harnessing}. 


Another solution that has received considerable attention is to aggregate statistically diverse renewable energy sources \cite{NERC2009Report, Baeyens2013, ZQRGP15}. 
In an aggregation, renewable power producers (RPPs) pool their generation together so as to reduce the aggregate uncertainty and risk, and hence the corresponding cost of compensation for their uncertainties. 
Accordingly, by forming an aggregation, RPPs can in total earn a higher payoff. A key question in aggregating RPPs is thus how to allocate the total payoff of an aggregation among its member RPPs. 

Notably, aggregating renewable energies has been studied extensively in the context of a two-settlement power market model, consisting of a forward power market and a real time one. As such, RPPs participate in these markets in the same way as conventional generators do. 
With this model, allocating payoffs in an aggregation of RPPs has been studied in a \emph{coalitional game} framework based on the \emph{joint probability distribution} of all the RPPs' uncertain generation \cite{Baeyens2013, ZQRGP15}. The primary interest in this setting is to find a payoff allocation solution that is \emph{stable/in the core} of the game. This is in general computationally hard in the sense that the number of constraints of the corresponding optimization problem grows exponentially with the number of RPPs. To this end, the core is proven to be non-empty in \cite{Baeyens2013}, 
and a closed-form solution of a payoff allocation in the core is found in \cite{ZQRGP15}. 



While this line of works achieve efficiency (with an optimal forward contract) and stability (with a payoff allocation in the core) in aggregating RPPs, an underlying assumption is that the aggregator \emph{knows the joint probability distribution of the RPPs' generation}. 
In practice, however, an aggregator typically does not have the best or full knowledge of such information about the RPPs: not only the amount of relevant information can be overwhelming to glean, but also the RPPs themselves often have better information privately about their own generation than the aggregator does. As a result, to aggregate renewables in practice, 
it is essential to consider an \emph{information collection} step by the aggregator with the RPPs. 
Consequently, with this step, aggregating renewable energies in a two-settlement market constitutes a \emph{mechanism design} problem (as will be shown in details in Section \ref{sec:MD}). In short, the primary goal of such mechanism design is the following: \emph{Granted that all the RPPs behave strategically for their own interests under this mechanism}, the aggregation can still achieve the same desirable outcome \emph{as if all the RPPs' information are indeed known to the aggregator}. 


In general, there's complete freedom in designing the information collection step of the mechanism. In particular, when the aggregator does \emph{not} elicit {all available information} from the RPPs, the mechanism is called an \emph{``indirect''} one , (in contrast to a ``direct'' one when all information from the RPPs are requested by the aggregator upfront) \cite{AlgorithmGameTheory}. Indeed, we would like an aggregator to elicit \emph{as little information as possible} from the RPPs, while still guaranteeing the performance of the overall mechanism. To this end, 
a simple interface between aggregator and RPPs has been proposed in \cite{Nayyar13}: each RPP submits \emph{just a single number} to the aggregator, and the aggregator simply passes on the sum of these numbers as the forward power contract for the entire aggregation. 
Based on this simple interface, 
the central design task is again on the \emph{payoff allocation} among the RPPs, for which a number of payoff allocation mechanisms (PAMs) have been proposed \cite{Nayyar13, LinBitar, Harirchi14, Harirchi16, ISGT2017, chakraborty2017cost}. 
Under any given PAM, the RPPs' strategic decision making 
entails a \emph{non-cooperative game} (as will be described later in Section \ref{noncoop}), and properties of the Nash Equilibria of this game have been studied in \cite{Nayyar13, LinBitar, Harirchi14, Harirchi16}. The existing PAMs in the literature, however, have only gained limited success, as some essential and highly desired properties still cannot be achieved. In particular, achieving efficiency and stability/in the core at the Nash Equilibria remains to be an open question. 
Lastly, we note that mechanism design methods have also been employed in power markets for problems other than renewable energy aggregation, e.g., for incentivizing conventional generators to reveal truthful information \cite{silva2001application}. 

In this paper, we investigate indirect mechanism design under the framework of the above simple interface. We propose a new payoff allocation mechanism, and show that all the essential desirable properties are achieved by this PAM. We first show that, given the designed mechanism, the outcome of the mechanism can be predicted by a \emph{unique Nash equilibrium} (NE) among the RPPs, for which we provide a \emph{closed-form} expression. Moreover, this unique NE is \emph{efficient}, meaning that it achieves the \emph{maximum} total payoff \emph{as if all information are known a-priori to the aggregator}. Next, we show that, the proposed payoff allocation is \emph{stable/in the core} at this unique NE, meaning that \emph{no subset} of the RPPs have any incentive to leave the aggregation as they cannot possibly earn a higher payoff on their own. Furthermore, we show that the designed mechanism guarantees \emph{no collusion} among the RPPs at the unique NE, as they cannot earn a higher payoff by colluding. {Lastly, we show that a set of \emph{ex-post} properties are achieved by the proposed mechanism (with results reported in part in \cite{ISGT2017}). We note that similar ex-post results have also been independently developed in a recent work \cite{chakraborty2017cost}.} 

The remainder of the paper is organized as follows. The problem is formulated in Section \ref{ProblemFormul}, and the indirect mechanism design framework for aggregating RPPs is introduced. 
The design goals, i.e., the desired properties of the PAM 
are described in Section \ref{DesiredPropert}. The main results are presented in Section \ref{mainresults}, in which we show that the proposed PAM achieves all the desired properties. Analysis and proofs of the main results are provided in Section \ref{sec:ana}. Another set of properties achieved in an ``ex-post'' sense by the proposed PAM are presented in Section \ref{ExPostPropThePropMechan}. Simulations are conducted in Section \ref{Simul}. Conclusions are drawn in Section \ref{Concl}.

%
%
%
%
%
%
%
\section{Problem Formulation}   \label{ProblemFormul}
\subsection{System Model} \label{sec:twoset} 

We consider RPPs participating in a two-settlement power market consisting of a day-ahead (DA) and a real time (RT) market. 
As a baseline case, we first consider an RPP $i$ who participates in the market \emph{separately} from the other RPPs. 

In the DA market, RPP $i$'s generation at the time of interest in the next day is modeled as a random variable, denoted by $X_i$. 
(We assume that the joint probability density function for the vector of random variables $X_1,\ldots,X_N$ exists.) 
RPP $i$ then  determines a forward power supply contract in the amount of $c_i$ to sell in the DA-market. Interchangabely, $c_i$ is also termed a day-ahead commitment. 
RPP $i$ gets a payoff of $p^f c_i$ where $p^f$ denotes the price in the DA market. 

At the delivery time in the next day, 
RPP $i$ obtains its realized generation $x_i$: a) If it faces a shortfall, i.e., $c_i - x_i>0$, it needs to purchase the remaining power from the RT market at a real-time buying price $p^{r,b}$, b) if it has excess power, i.e. $x_i - c_i>0$, it can sell it in the RT-market at a real-time selling price $p^{r,s}$. In case excess power needs to be penalized as opposed to rewarded, we model such cases by having $p^{r,s}<0$.  
We make the assumption that $p^{r,s} \le p^{r,b}$, which must hold for no arbitrage. 
Intuitively, the higher uncertainty an RPP's generation has at DA, the more cost it incurs to the RPP. 

Specifically, the \emph{realized} payoff of an RPP $i$ who separately participates in the market is given by 
\begin{align}   \label{PayoffSep}
\mc{P}_i^{sep}  \triangleq p^f c_i - p^{r,b} \left(c_i - x_i\right)_+ + p^{r,s} \left(x_i - c_i\right)_+ 
\end{align}
where 
$(\cdot)_+ \triangleq \max(0,\cdot)$\footnote{We use the symbol $\triangleq$ to define notations.}. We denote the expected payoff of RPP $i$ at the time when $c_i$ is determined one day ahead by 
\begin{align} \label{pisep}
\pi_i^{sep}(c_i) \triangleq \expec[\mc{P}_i^{sep}],
\end{align} 
where the expectation is taken over the random generation $X_i$. 
\begin{RK}[Model Assumptions on Prices]
In this paper, we consider a price taking environment for the RPPs in the DA market, and thus the price $p^f$ is given. We also consider that some fixed values for the RT buying and selling prices $p^{r,b}$ and $p^{r,s}$ are assumed at DA. These values can be interpreted as the RPPs' expectations of the RT prices if they were to experience a shortfall or a surplus, respectively. We note that the price taking assumption is a simplifying one, which assumes that none of the RPPs can affect the price significantly at DA due to its relatively small size. 
This provides a first order approximation of the problem that allows us to perform effective analysis and gain insight. The results will lay the foundation for further investigation of more general scenarios. 
In particular, 
we would like to note that our latest result following this paper has achieved some success in relaxing these assumptions, and has addressed the price-making scenarios in both DA and RT markets \cite{HZ18}. 
\end{RK}

\subsection{Aggregating Renewable Energies} 
We consider an aggregator that aggregates the power generation from a set of $N$ RPPs, denoted by $\mc{N}$, and participates in the DA-RT market on behalf of the RPPs. Intuitively, aggregation reduces the total uncertainty for the RPPs due to the statistical compensation 
among the random power generation at DA, and hence brings economic benefit to them. In this paper, transmission network constraints are not considered, and are left for future work.  

In general, the aggregator takes actions in the DA and RT markets as follows:

\begin{enumerate}
	\item[a.] In the DA market, the aggregator determines an amount of forward power contract to sell, denoted by $c_{\mc{N}}$. 
	\item[b.] 
	At the delivery time, the aggregator collects all the realized generation from the RPPs, denoted by $x_{\mc{N}} = \sum_{i \in \mathcal{N}}{x_i}$, to meet the commitment $c_{\mc{N}}$. The deviation is settled in the RT market in the same way as in Section \ref{sec:twoset}. The realized payoff of the aggregator is thus 
	\begin{align}\label{PayoffAgg}
	\hspace{-2pt} \mc{P}_\mc{N}  \hspace{-1pt} \triangleq p^f c_{\mc{N}} \hspace{-2pt}  -   p^{r,b} \hspace{-2pt} \left(c_{\mc{N}} \hspace{-2pt} - \hspace{-1pt} x_{\mc{N}}\right)_+ \hspace{-2pt} + p^{r,s} \hspace{-2pt} \left(x_{\mc{N}} \hspace{-2pt} - \hspace{-2pt} c_{\mc{N}}\right)_+
	\end{align}
	Next, the aggregator returns a payoff $\mc{P}_i $ to each RPP $i$. 
\end{enumerate}
\begin{RK}[Budget Balance]
In this paper, we require the aggregator's budget balance be satisfied \emph{in all circumstances}: 
\begin{align}
\sum_{i=1}^N \mc{P}_i = \mc{P}_\mc{N}. 
\end{align}
We note that this is a stronger condition than just requiring budget balance be satisfied in expectation. 
\end{RK}

Accordingly, there are two decisions an aggregator needs to make: a) the total commitment $c_\mc{N}$ at DA, and b) the set of payoff allocations $\{\mc{P}_i\}$ at RT. 
In making these decisions, two fundamental goals an aggregator would like to achieve are:
\begin{itemize}
\item \emph{Efficiency}: The \emph{total} expected payoff of the aggregation $\pi_\mc{N} = \expec[\mc{P}_\mc{N}]$ is maximized. 
\item \emph{Fairness / Stability}: The payoff allocation within the aggregation $\{\mc{P}_i\}$ are fair to each RPP. In this paper, we interpret fairness using the notion of \emph{stability/in the core} from a coalitional game perspective, as will be described in detail in Section \ref{DesiredPropert}. 
\end{itemize} 
In particular, a) achieving efficiency depends on the aggregator's decision on the total DA commitment $c_\mc{N}$, and b) with the optimal $c_\mc{N}$, achieving stability depends on the decisions on the RT payoff allocation $\{\mc{P}_i\}$. 

\subsubsection*{The Ideal Case of Aggregator Having Full Information}
To achieve efficiency and stability, making decisions on $c_\mc{N}$ and $\{\mc{P}_i\}$ requires the aggregator to know sufficient information from the RPPs. The ideal case would be an aggregator with \emph{full information} from the RPPs, in particular, the DA \emph{joint probability distribution} of all the random generation $\{X_i\}$. 
Based on the joint probability distribution of $\{X_i\}$, closed-form solutions of $c_\mc{N}$ and $\{\mc{P}_i\}$ that achieve efficiency and stability have been found in \cite{ZQRGP15}, and will be used for numerical comparisons later. 


\subsection{The Mechanism Design Problem} \label{sec:MD}

In practice, however, 
it is unlikely for an aggregator to precisely know the DA joint probability distribution of $\{X_i\}$ for a number of reasons: a) the best information on future power generation may only be \emph{privately} known to the RPPs, and b) the amount of information can be overwhelmingly large and difficult to glean for a single aggregator, especially when the number of RPPs becomes large, (consider, e.g., hundreds of thousands of distributed energy resources in a power distribution system). 

\emph{In this paper, we do not assume the aggregator knows any information a-priori at DA on the RPPs' random generation $\{X_i\}$.} Instead, we consider a general framework in which the aggregator \emph{elicits information} from the RPPs, based on which decisions on $c_\mc{N}$ and $\{\mc{P}_i\}$ are then made. As such, the aggregator's actions involve the following three general steps: 
\begin{enumerate}
\item[a.] \emph{Information Collection:} At DA, the aggregator elicits certain information from the RPPs. 
\item[b.] \emph{Commitment:} At DA, the aggregator determines a total DA commitment $c_\mc{N}$. 
\item[c.] \emph{Payoff Allocation:} At RT, the aggregator allocates a payoff $\mc{P}_i$ to RPP $i, ~i=1,\ldots,N$. 
\end{enumerate}
Specifying how these three steps are performed constitutes a \emph{mechanism design} problem. It is important to note the \emph{generality} of this design problem, as there is complete freedom in choosing what information to request from the RPPs, how they are used to determine $c_\mc{N}$, and how payoffs are allocated. 

\subsection{An Indirect Mechanism Design Framework} \label{indmec}
In this mechanism design problem, it is \emph{not} imperative for the aggregator to elicit \emph{all} information from the RPPs. When the aggregator does elicit all information upfront, such a mechanism is called a ``direct mechanism''; 
Otherwise, it is called an ``indirect mechanism'' \cite{AlgorithmGameTheory}. 
Rather than restricting ourselves to direct mechanisms, more generally, we will investigate \emph{indirect} mechanism design: We would like to \emph{elicit as little information from the RPPs as possible, while still guaranteeing efficiency and stability of the aggregation}.  

In particular, we will investigate the following framework of indirect mechanisms employing a simple design of Steps a. (Information Collection) and b. (Commitment) \cite{Nayyar13}. 
\begin{enumerate}
\item[a.] At DA, the aggregator elicits \emph{a single number} $c_i$ from each RPP $i$. 
\item[b.] At DA, the aggregator commits $c_\mc{N} = \sum_{i=1}^N c_i$. 
\item[c.] At RT, the aggregator allocates a payoff $\mc{P}_i$ to RPP $i, ~i=1,\ldots,N$. 
\end{enumerate}
We term the number $c_i$ submitted by RPP $i$ its \emph{DA commitment}. Accordingly, the aggregator simply passes on the \emph{sum of the RPPs' DA commitments as the aggregate commitment}. 
As Steps a. and b. are now fully specified, the central design task is \emph{Step c. -- Payoff Allocation}. As such, the design and analysis of \emph{Payoff Allocation Mechanisms} (PAMs) that determines $\mc{P}_i$ would be the focus of the remainder of the paper. 

\subsection{Mechanism's Outcome in a Non-Cooperative Game of Strategic RPPs} \label{noncoop}
Given any PAM that specifies the rule of determining $\mc{P}_i$, a key question is how one predicts the \emph{outcome} under this mechanism. To answer this question, we must understand the behavior of the RPPs given any mechanism. As an RPP $i$ is free to submit any DA commitment $c_i$ to the aggregator, a \emph{rational and strategic} RPP $i$ would submit a $c_i$ at DA that \emph{maximizes} its expected allocated payoff $\expec[\mc{P}_i]$. It is important to note that, given a PAM, $\expec[\mc{P}_i]$ can also depend on the \emph{other} RPPs' submissions of commitments.
{
Accordingly, we denote the expected payoff of RPP $i$ by
\begin{align}
\pi_i \left(c_i , \left\{c_{-i}\right\}\right) \triangleq \expec[\mc{P}_i],
\end{align}
where $\left\{c_{-i}\right\}$ denotes the set of the commitments of the RPPs other than $i$. Note that $\pi_i \left(c_i , \left\{c_{-i}\right\}\right)$ depends on the particular design of the PAM. 
}

Therefore, the strategic decision making of the $N$ RPPs on their submissions $\{c_i\}$ at DA can be studied under a \emph{non-cooperative game} framework, (termed a ``contract game'' in \cite{Nayyar13}.) To predict the \emph{outcome} of any designed mechanism, a natural solution concept is the \emph{Nash equilibria} of this non-cooperative game\footnote{We note that, in practice, NE may not be achieved in a dynamic market. Investigation of other solution concepts is left for future work.}. 
Specifically, 
given a PAM, a set of commitments $\{c_i^{ne}\}$ is at a pure Nash equilibrium (NE) if they satisfy 
\begin{align} \label{defNE}
c_i^{ne} \in \argmax_{c_i} \pi_i \left(c_i , \left\{c_{-i}^{ne}\right\}\right), ~\forall i. 
\end{align}
As such, an NE offers a stable\footnote{This notion of stability from NE is \emph{not} the coalitional game theoretic stability 
which will be described as Property 5) in the next section.} outcome of the RPPs' decision making on their commitments, as no RPP has any incentive to deviate from its already best responding commitment. 
In the remainder of the paper, we devote the notation $\{c_i^{ne}\}$ to denoting a set of commitments at a pure NE.

As a mechanism designer for aggregating RPPs, we are interested in designing a PAM so that a set of essential and desirable properties can be achieved \emph{at equilibria} of this non-cooperative game, given the designed PAM.

\section{Design Goals: \\ Desired Efficiency and Stability Properties}  \label{DesiredPropert}

In this section, we provide a 
detailed description of the desired properties in designing PAM for aggregating RPPs. 


\begin{enumerate} 
	\item \emph{Existence and uniqueness of pure Nash equilibrium}:
	For a mechanism to have predictable outcomes, it is desired that the non-cooperative game among the RPPs induced by the mechanism has a unique pure NE, 
	which would be the unique outcome that one shall predict from the RPPs' strategic decision making. 
	
	\item \emph{Efficient computation of NE}: In particular, we are interested in whether the unique pure NE, if exists, can be computed in \emph{closed form}. 
	%

	%
	\item \emph{Efficiency}: A PAM is efficient if, at the NE, the aggregation achieves the \emph{maximum} expected payoff for the entire group of RPPs. Specifically, this means that $c_\mc{N}^{ne} \triangleq \sum_{i=1}^N c_i^{ne}$ is equal to 
	the optimal commitment for the entire aggregation 
	\begin{align}
	c_{\mc{N}}^{\star} \triangleq \argmax_{c_\mc{N}} \expec[\mc{P}_\mc{N}]
	\end{align} (cf. \eqref{PayoffAgg}). 
	This optimal commitment can in fact be computed as a solution to a news-vendor problem (for which we refer the readers to \cite{Bitar2012} for more details): $c_{\mc{N}}^{\star} = \mc{F}_{\mc{N}}^{-1} \left(\frac{p^f - p^{r,s}}{p^{r,b} - p^{r,s}}\right)$, where  
$\mc{F}_{\mc{N}} \left(x_{\mc{N}}\right)$ is the cumulative distribution function (cdf) of the \emph{aggregate} random generation $X_\mc{N} = \sum_{i=1}^N X_i$. In this paper, we assume that the inverse function $\mc{F}_{\mc{N}}^{-1}(\cdot)$ exists. 

	\item \emph{Individual rationality}: At the NE, the expected payoff of RPP $i$ should be at least as high as the \emph{maximum} payoff it could have gotten had it separately participated in the DA-RT market. Specifically, 
	\begin{align} \label{IR}
	\forall i \in \mathcal{N}, ~~~\pi_i \left(c_i^{ne}, \left\{c_{-i}^{ne}\right\}\right)  \geq \pi_i^{sep} \left( c_i^{\star , sep} \right),  
	\end{align}
	where 
	\vspace{-5pt}
\begin{align}\label{optsep}
c_i^{\star , sep} \triangleq \argmax_{c_i} \pi_i^{sep} \left(c_i\right)
\end{align} (cf. \eqref{PayoffSep} and \eqref{pisep}). 
It is important to note that, the (separately) optimal commitment $c_i^{\star , sep}$ is in general \emph{not equal} to the equilibrium commitment $c_i^{ne}$.  
With individual rationality satisfied \eqref{IR}, not a single RPP has any incentive to leave the aggregation. 
	\item \emph{Stability / In the core}: A generalization of individual rationality to a much stronger sense of stability is being ``in the core'', a property celebrated in coalitional game theory \cite{GameBook}. Specifically, being in the core 
	means that the RPPs' expected payoffs satisfy the following condition: 
	if any subset $\mc{T}$ of the RPPs leave the aggregation, separately form their own aggregation, and then participate in the market based on their aggregate generation $X_\mc{T} \triangleq  \sum_{i\in\mc{T}} X_i$, their highest possible expected payoff would be 
no higher than the sum of their expected payoffs originally from the PAM at the NE. Specifically, 
	\begin{align}  \label{ExAnteStronglyInCore}
	\forall \mc{T} \subset \mc{N}, \  \sum_{i \in {\mc{T}}} \pi_i \left(c_i^{ne} , \{c_{-i}^{ne}\} \right) \geq  \pi_{\mc{T}}^{sep} \left(c_{\mc{T}}^{\star , sep}\right), 
	\end{align}
	where 
	\vspace{-5pt}
	\begin{align} 
	\pi_{\mc{T}}^{sep}(c_{\mc{T}}) \triangleq \expec \big[ & p^f c_\mc{T} - p^{r,b} \left(c_\mc{T} - X_\mc{T}\right)_+  \nn\\
	 & + p^{r,s} \left(X_\mc{T} - c_\mc{T}\right)_+ \big], 
	\end{align}
	and ~$c_{\mc{T}}^{\star , sep} \triangleq \argmax_{c_{\mc{T}}} \pi_{\mc{T}}^{sep}(c_{\mc{T}})$. 
As such, being in the core means that, not only every single RPP, but \emph{all subsets} of RPPs do not have any incentive to leave the aggregation. This implies a very strong sense of stability of an aggregation. 
	
	
	\item \emph{No collusion}: Suppose a subset of RPPs 
	join together as a single player \emph{before} participating in the aggregation with the remaining RPPs. 
	{For now, we assume the remaining RPPs know the joining of these RPPs as a single player. Later, we will show that the DA commitments of the remaining RPPs at the NE actually do \emph{not} depend on whether or not they know there is a collusion.} 
	Because of the change of the set of players, 
	a new game, and hence new NE would arise. 
	The expected payoff of this ``joint player'' at the new NE should be \emph{no higher} than the sum of these RPPs' expected payoffs at the NE of the original game. 
	Otherwise, some RPPs could have incentives to collude, join together, and collectively interface with the aggregator as a single (and larger) RPP in order to earn a higher total payoff. 
	
	Rigorously, no collusion is defined as follows: $\forall \mc{T} \subseteq \mc{N},$
			\begin{align}  \label{ExAnteNoGamProofPart1}
			\sum_{ i \in \mc{T}} \pi_i \left( c_i^{ne} , \left\{ c_{-i}^{ne} \right\} \right) \geq \tilde{\pi}_{\mc{T}} \left( \tilde{c}_{\mc{T}}^{ne} , \left\{ \tilde{c}_{-\mc{T}}^{ne} \right\} \right), 
			\end{align}
			where 
			$\left\{ \tilde{c}_{\mc{T}}^{ne} , \left\{ \tilde{c}_{-\mc{T}}^{ne} \right\} \right\}$ is the \emph{new} NE of the new non-cooperative game for the case when the RPPs in $\mc{T}$ join as a single player.  
	
\end{enumerate}

%
%
%

\section{Main Results} \label{mainresults}
We now present the main results of this paper: a proposed payoff allocation mechanism, and how it achieves all the above desired properties 1) - 6). 
First, we propose the following payoff allocation mechanism:  
\begin{align}   \label{NewPAM}
{\mc{P}}_i = 
\begin{cases}
p^f c_i +  		p^{r,b} \left(x_i - c_i\right)       & \quad \text{if }  x_{\mathcal{N}} - c_{\mathcal{N}} < 0 \\
p^f c_i + p^* \left(x_i - c_i\right)       & \quad \text{if }  x_{\mathcal{N}} - c_{\mathcal{N}} = 0\\
p^f c_i +  		p^{r,s} \left(x_i - c_i\right)       & \quad \text{if }  x_{\mathcal{N}} - c_{\mathcal{N}} > 0 
\end{cases},
\end{align}
where $p^{r,s}\le p^*\le p^{r,b}$, and $p^*$ can be chosen arbitrarily within this range. 

\begin{RK}[Non-Concavity of the Non-Cooperative Game]
Given the proposed PAM \eqref{NewPAM}, it is worth noting that $\pi_i \left(c_i , \left\{c_{-i}\right\}\right)$ is \emph{not} a concave function in $c_i$. Thus, the non-cooperative game of DA commitments among the RPPs 
is not a concave game. As such, the behavior of the game (e.g., whether an NE exists) cannot be predicted from existing theories of concave games \cite{rosen1965existence}. Nonetheless, we analyzed the game with new techniques, and the main results are described next. 
\end{RK}

{
Due to the non-concavity of the non-cooperative game among the RPPs, the existence of a pure NE is not always guaranteed. In the following, we first show the closed form of the unique pure NE if any pure NE exists at all, and then show a necessary and sufficient condition for a pure NE to always exist regardless of the DA and RT prices. 

\begin{theorem} \label{thmNE}
Employing the PAM \eqref{NewPAM}, if the non-cooperative game among the RPPs (cf. Section \ref{noncoop}) possesses any pure NE, it must be unique, and is given by the following DA commitments: $ \forall i = 1, \dots, N$, 
	\begin{align}   \label{closedformNE}
	c_i^{ne}  & =    \mathbb{E} \left[ X_i  \Big| X_{\mathcal{N}} = c_{\mathcal{N}}^{\star} \right],  
	\end{align}
	where $X_\mc{N} = \sum_{i=1}^N X_i$, 
	$c_{\mc{N}}^{\star} \triangleq \argmax_{c_\mc{N}} \expec[\mc{P}_\mc{N}] = \mc{F}_{\mc{N}}^{-1} \left(\frac{p^f - p^{r,s}}{p^{r,b} - p^{r,s}}\right)$, and  
$\mc{F}_{\mc{N}} \left(x_{\mc{N}}\right)$ is the cdf 
of $X_\mc{N}$. 
\end{theorem}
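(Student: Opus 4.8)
The plan is to derive the stationarity condition that every best response must obey and then exploit the additive structure of \eqref{NewPAM} to show the resulting system of $N$ equations has exactly one solution. Since a pure NE is assumed to exist, each $c_i^{ne}$ is by definition a global maximizer of the unconstrained map $c_i \mapsto \pi_i\left(c_i, \{c_{-i}^{ne}\}\right)$ over $\mbb{R}$; as this map turns out to be continuously differentiable under the standing assumption that the joint density of $(X_i, X_{\mc{N}})$ exists, any maximizer is interior and the derivative must vanish there. So the argument reduces to analyzing these stationarity equations jointly. First I would write out the expected payoff: because $X_{\mc{N}}$ has a density, the event $\{X_{\mc{N}} = c_{\mc{N}}\}$ has probability zero, the middle ($p^*$) branch of \eqref{NewPAM} is irrelevant in expectation, and using $\expec[X_i - c_i] = \expec[X_i] - c_i$ to eliminate the $p^{r,s}$ term collapses $\pi_i$ to
\begin{align}
\pi_i = (p^f - p^{r,s})c_i + p^{r,s}\expec[X_i] + (p^{r,b} - p^{r,s})\, A_i(c_i), \nonumber
\end{align}
where $A_i(c_i) \triangleq \expec\big[(X_i - c_i)\,\mathbf{1}\{X_{\mc{N}} < c_{\mc{N}}\}\big]$ and $c_{\mc{N}} = c_i + \sum_{j \neq i} c_j$.

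Next I would differentiate $A_i$ in $c_i$, taking care that $c_i$ enters both the integrand and the region of integration through $c_{\mc{N}}$. A Leibniz-rule computation against the joint density of $(X_i, X_{\mc{N}})$ gives
\begin{align}
A_i'(c_i) = -\,\mc{F}_{\mc{N}}(c_{\mc{N}}) + f_{X_{\mc{N}}}(c_{\mc{N}})\big(\expec[X_i \mid X_{\mc{N}} = c_{\mc{N}}] - c_i\big), \nonumber
\end{align}
where $f_{X_{\mc{N}}}$ is the density of $X_{\mc{N}}$. Setting $\pi_i' = 0$ then yields, for every $i$, the stationarity condition
\begin{align}
(p^f - p^{r,s}) - (p^{r,b} - p^{r,s})\mc{F}_{\mc{N}}(c_{\mc{N}}) + (p^{r,b} - p^{r,s})\, f_{X_{\mc{N}}}(c_{\mc{N}})\big(\expec[X_i \mid X_{\mc{N}} = c_{\mc{N}}] - c_i\big) = 0. \nonumber
\end{align}

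The decisive observation is that the first two terms are common to all $i$, depending only on the aggregate $c_{\mc{N}}$. Evaluating at an NE (so $c_{\mc{N}} = c_{\mc{N}}^{ne} = \sum_i c_i^{ne}$) and summing the $N$ conditions, I would invoke the two identities $\sum_i \expec[X_i \mid X_{\mc{N}} = c_{\mc{N}}^{ne}] = \expec[X_{\mc{N}} \mid X_{\mc{N}} = c_{\mc{N}}^{ne}] = c_{\mc{N}}^{ne}$ and $\sum_i c_i^{ne} = c_{\mc{N}}^{ne}$, so the individual terms cancel and the common part is forced to vanish: $(p^f - p^{r,s}) = (p^{r,b} - p^{r,s})\mc{F}_{\mc{N}}(c_{\mc{N}}^{ne})$. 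Invertibility of $\mc{F}_{\mc{N}}$ then pins down $c_{\mc{N}}^{ne} = c_{\mc{N}}^{\star}$ uniquely, which is exactly the efficiency statement. Substituting this back kills the common terms in each individual condition, so provided $p^{r,b} > p^{r,s}$ and $f_{X_{\mc{N}}}(c_{\mc{N}}^{\star}) > 0$, we must have $c_i^{ne} = \expec[X_i \mid X_{\mc{N}} = c_{\mc{N}}^{\star}]$, delivering both the closed form \eqref{closedformNE} and uniqueness.

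I expect the main obstacle to be the Leibniz differentiation of $A_i$: justifying differentiation under the integral sign and correctly isolating the boundary term in which $c_i$ enters the integration limit are where the real work lies, and both rely essentially on the existence and regularity of the joint density of $(X_i, X_{\mc{N}})$. The remaining steps, namely the summation identity and the invertibility argument, are then short. I would emphasize that this establishes only necessity, i.e.\ any NE must take the stated form, which is consistent with the theorem asserting uniqueness but not existence; the latter is obstructed by the non-concavity flagged in the Remark and would require a separate verification that this candidate is in fact a global best response.
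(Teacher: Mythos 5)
Your proposal is correct and follows essentially the same route as the paper's proof in Appendix~\ref{thm1prf}: derive the first-order condition $d\pi_i/dc_i=0$ (your Leibniz computation yields exactly the paper's expression \eqref{dpi_i}, with the conditional-expectation term written as $f_{X_{\mc{N}}}(c_{\mc{N}})\left(\expec[X_i \mid X_{\mc{N}}=c_{\mc{N}}]-c_i\right)$), sum the $N$ conditions so the individual terms cancel and $\mc{F}_{\mc{N}}(c_{\mc{N}}^{ne})$ is pinned down, giving $c_{\mc{N}}^{ne}=c_{\mc{N}}^{\star}$, then substitute back to obtain \eqref{closedformNE}. Your closing remark that this establishes necessity only is also exactly the paper's reading, with existence deferred to Theorem~\ref{conditionNE}.
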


\begin{RK}
To compute its equilibrium commitment, an RPP needs only to know 
the bivariate probability distribution function (pdf) of its own generation and the total generation of  the aggregation, \textit{i.e.}, $f_{X_i X_{\mc{N}}} \left(x_i , x_{\mc{N}}\right)$. As such, knowledge of the complete joint pdf of all the RPPs is not needed.
\end{RK}

\begin{theorem} \label{conditionNE}
	The non-cooperative game among the RPPs always possesses a pure NE if the following condition holds,
	\begin{align}  \label{conditionNEeq}
	\forall i \in \mc{N}, \forall \alpha, ~  \frac{d \mathbb{E}_{X_i | X_{\mc{N}}} \left[X_i | X_{\mc{N}} = \alpha\right]}{d \alpha} \leq 1. 
	\end{align}
	Conversely, if the condition \eqref{conditionNEeq} does not hold, then there exists a set of DA and RT prices such that a pure NE does not exist. 
\end{theorem}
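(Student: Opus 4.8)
The plan is to reduce both directions to a one-dimensional analysis of each RPP's best-response problem, building on the candidate equilibrium already pinned down in Theorem~\ref{thmNE}. Writing $g_i(\alpha) \triangleq \expec[X_i \mid X_{\mc{N}} = \alpha]$ and letting $f_{\mc{N}}, \mc{F}_{\mc{N}}$ denote the pdf/cdf of $X_{\mc{N}}$, I would first express RPP~$i$'s expected payoff under \eqref{NewPAM} by conditioning on $X_{\mc{N}}$: since $\{X_{\mc{N}} = c_{\mc{N}}\}$ has probability zero, only the buy/sell branches contribute, and differentiating in $c_i$ (with $c_{\mc{N}} = c_i + \sum_{j\ne i} c_j$) gives, up to the positive factor $p^{r,b} - p^{r,s}$, the quantity
\[
\phi_i(c_{\mc{N}}) = \Big(\tfrac{p^f - p^{r,s}}{p^{r,b}-p^{r,s}} - \mc{F}_{\mc{N}}(c_{\mc{N}})\Big) + \big(g_i(c_{\mc{N}}) - c_i\big)\, f_{\mc{N}}(c_{\mc{N}}).
\]
Setting $\phi_i = 0$ recovers \eqref{closedformNE}. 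The key reformulation is then to substitute the candidate value of $\sum_{j\ne i}c_j$ and introduce $h_i(\alpha)\triangleq g_i(\alpha)-\alpha$, so the stationarity term becomes $h_i(c_{\mc{N}})-h_i(c_{\mc{N}}^\star)$ and condition \eqref{conditionNEeq} reads exactly $h_i'\le 0$, i.e.\ $h_i$ nonincreasing.

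For sufficiency I would fix all RPPs at the candidate commitments and view RPP~$i$'s payoff as a function of the induced aggregate $c_{\mc{N}}$. Since $\mc{F}_{\mc{N}}(c_{\mc{N}}^\star) = (p^f-p^{r,s})/(p^{r,b}-p^{r,s})$, the first bracket of $\phi_i$ is $\ge 0$ for $c_{\mc{N}}<c_{\mc{N}}^\star$ and $\le 0$ for $c_{\mc{N}}>c_{\mc{N}}^\star$; under \eqref{conditionNEeq} the monotonicity of $h_i$ makes the second bracket carry the \emph{same} sign. Hence $\phi_i\ge 0$ left of $c_{\mc{N}}^\star$ and $\le 0$ to its right, so the payoff is nondecreasing then nonincreasing and $c_i^{ne}$ is a \emph{global} best response. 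Because this holds simultaneously for every $i$ and the candidate is self-consistent ($\sum_i c_i^{ne}=\sum_i g_i(c_{\mc{N}}^\star)=c_{\mc{N}}^\star$), the candidate is a pure NE for \emph{all} admissible prices.

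For the converse, suppose \eqref{conditionNEeq} fails: $g_i'(\alpha^\ast)>1$ for some $i,\alpha^\ast$, so $h_i$ is strictly increasing near $\alpha^\ast$. By Theorem~\ref{thmNE} any pure NE must coincide with the candidate, so it suffices to exhibit prices for which the candidate is \emph{not} a best response for RPP~$i$. I would choose $p^f,p^{r,b},p^{r,s}$ (hence $c_{\mc{N}}^\star = \mc{F}_{\mc{N}}^{-1}((p^f-p^{r,s})/(p^{r,b}-p^{r,s}))$) so that the increasing stretch of $h_i$ lies to one side of $c_{\mc{N}}^\star$; there the second bracket of $\phi_i$ turns positive and forces a second sign change, creating a competing stationary point of RPP~$i$'s payoff. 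The remaining task is to show that this competing point yields a \emph{strictly higher} payoff, which I would do by integrating $\phi_i$ from $c_{\mc{N}}^\star$ to that point and showing the net area is positive for a suitable placement of $c_{\mc{N}}^\star$; a profitable unilateral deviation then rules out the candidate and, with it, any pure NE.

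The hard part will be precisely this last step. A short computation gives $\phi_i'(c_{\mc{N}}^\star)=f_{\mc{N}}(c_{\mc{N}}^\star)\,(g_i'(c_{\mc{N}}^\star)-2)$, so whenever $g_i'(c_{\mc{N}}^\star)\in(1,2)$ the candidate still satisfies the local second-order condition and remains a \emph{local} maximum; the failure of equilibrium is therefore a genuinely \emph{global} phenomenon that no second-order test can detect. The delicate work is to position $c_{\mc{N}}^\star$ relative to the increasing region of $h_i$ and to control the competition between the always-penalizing monotone term $\big(\tfrac{p^f-p^{r,s}}{p^{r,b}-p^{r,s}}-\mc{F}_{\mc{N}}\big)$ and the deviation-rewarding term $(h_i(c_{\mc{N}})-h_i(c_{\mc{N}}^\star))f_{\mc{N}}$, so that the integrated payoff difference is strictly positive. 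Dispatching the edge cases—$f_{\mc{N}}$ vanishing on intervals, nondifferentiable $g_i$, and checking that the required $c_{\mc{N}}^\star$ is realizable by admissible prices with $p^{r,s}\le p^{r,b}$—would then close the argument.
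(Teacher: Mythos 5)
Your sufficiency argument is sound and is essentially the paper's proof in differential form: your sign analysis of $\phi_i$ on either side of $c_{\mc{N}}^{\star}$, via the nonincreasing $h_i(\alpha)=g_i(\alpha)-\alpha$, is exactly the pointwise bound $\mathbb{E}[X_i\,|\,X_{\mc{N}}=x_{\mc{N}}]-c_i\le 0$ (resp.\ $\ge 0$) that the paper integrates; integrating your $\phi_i$ from $c_{\mc{N}}^{\star}$ to $c_{\mc{N}}$ reproduces precisely the paper's equivalent NE condition \eqref{NE_necessary_simplified}, so Part 1 is fine and takes the same route.

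The converse is where your proposal has a genuine gap, and you correctly flagged its location: the ``net positive area'' step is never carried out, and in the generality you attempt it cannot be. Writing $g_k(\alpha)=\expec[X_k\,|\,X_{\mc{N}}=\alpha]$, the gain from the unilateral deviation $c_k=c_k^{ne}+\Delta$ is proportional to $\int_0^{\Delta}\bigl(g_k(c_{\mc{N}}^{\star}+u)-g_k(c_{\mc{N}}^{\star})-\Delta\bigr)f_{\mc{N}}(c_{\mc{N}}^{\star}+u)\,du$, which for small $\Delta$ is $\tfrac{1}{2}f_{\mc{N}}(c_{\mc{N}}^{\star})\bigl(g_k'(c_{\mc{N}}^{\star})-2\bigr)\Delta^2+o(\Delta^2)$ --- your own observation that the local threshold is slope $2$, not $1$. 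But the obstruction is not ``merely global'': if $g_k'\le 1+\epsilon$ everywhere with $\epsilon<1$ and $f_{\mc{N}}$ is constant over the relevant range (e.g.\ a uniform aggregate density), then $g_k(c_{\mc{N}}^{\star}+u)-g_k(c_{\mc{N}}^{\star})\le(1+\epsilon)u$ gives a gain of at most $\tfrac{1}{2}f_{\mc{N}}(\epsilon-1)\Delta^2<0$ for \emph{every} placement of $c_{\mc{N}}^{\star}$ and every $\Delta$ (the downward deviation is symmetric). So in such instances no admissible prices create a profitable deviation at all, and the ``suitable placement of $c_{\mc{N}}^{\star}$'' your plan hinges on simply does not exist; the competing-stationary-point strategy can only succeed with extra help from the distribution (e.g.\ $g_k'>2$ somewhere, or $f_{\mc{N}}$ growing sharply across the deviation interval), which the mere failure of \eqref{conditionNEeq} does not supply.

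For comparison, the paper's own proof of this direction makes exactly the leap you were honest about deferring: after choosing prices so that $c_{\mc{N}}^{\star}=D_1$ with $g_k'>1$ on $[D_1,D_2]$, it asserts without justification that the left-hand side of \eqref{NE_necessary_simplified} is positive for every $c_k\in\bigl(c_k^{ne},\,c_k^{ne}+D_2-D_1\bigr)$. By the computation above, that assertion fails when $g_k'\in(1,2)$ on the interval and $f_{\mc{N}}$ is locally constant, where the integral is $\approx\tfrac{1}{2}f_{\mc{N}}(g_k'-2)\Delta^2<0$. So the verdict is: your Part 1 matches the paper and is correct; your Part 2 is an unfinished outline whose missing step is a genuine gap --- and, notably, it is the very same step that the paper's argument asserts rather than proves, and which appears to require hypotheses stronger than \eqref{conditionNEeq} failing (such as a conditional-mean slope exceeding $2$) to go through.
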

%
%
\begin{RK}
We argue that the condition \eqref{conditionNEeq} is a reasonable one, as it holds when no single RPP ``dominates'' the entire aggregation. Consider the case when 
\begin{align} \label{rkeq1}
\exists k \in \mc{N},  \alpha, s.t., \frac{d \mathbb{E}_{X_k | X_{\mc{N}}} \left[X_k | X_{\mc{N}} = \alpha\right]}{d \alpha} > 1.
\end{align} Because $\mathbb{E}_{X_k | X_{\mc{N}}} \left[X_k | X_{\mc{N}}\right] + \sum_{i\ne k} \mathbb{E}_{X_i | X_{\mc{N}}} \left[X_i | X_{\mc{N}}\right]= X_\mc{N}$, we have that 
\begin{align} \label{rkeq2}
\frac{d \mathbb{E}_{X_k | X_{\mc{N}}} \left[X_k | X_{\mc{N}} = \alpha\right]}{d \alpha} + \frac{d \mathbb{E}_{X_{-k} | X_{\mc{N}}} \left[X_{-k} | X_{\mc{N}} = \alpha\right]}{d \alpha} = 1,  
\end{align}
where $X_{-k} \triangleq \sum_{i\ne k} X_i$. From \eqref{rkeq1} and \eqref{rkeq2}, we have 
\begin{align}
\frac{d \mathbb{E}_{X_{-k} | X_{\mc{N}}} \left[X_{-k} | X_{\mc{N}} = \alpha\right]}{d \alpha} < 0. 
\end{align} 
This means that \emph{the aggregation of all the RPPs other than $k$ is negatively correlated with the entire aggregation}. Intuitively speaking, this means that the single RPP $k$ is not only \emph{negatively correlated} with the aggregation of all the other RPPs, but also \emph{dominates} them, and hence dominates the entire aggregation. However, this is an unlikely situation especially when the number of RPPs is relatively large, and no single RPP can dominate the entire aggregation. 
%
%
%
\end{RK}

For the rest of this paper, we assume that the condition \eqref{conditionNEeq} holds, and thus the unique pure NE of the game is given by \eqref{closedformNE} as in Theorem \ref{thmNE}. 
}
The closed form of this pure NE immediately implies the following: 
\begin{corollary} \label{coroeff}
Given the PAM \eqref{NewPAM}, efficiency of the aggregation is achieved at the unique pure NE, i.e., 
\begin{align}
c_{\mc{N}}^{ne} = c_{\mc{N}}^{\star},
\end{align}
where $ c_{\mc{N}}^{ne} \triangleq \sum_{i=1}^N c_i^{ne}$. 
\end{corollary}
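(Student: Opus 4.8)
The plan is to read off the result directly from the closed-form NE established in Theorem~\ref{thmNE}; the corollary is genuinely a one-line consequence, so the work is to make the aggregation step precise rather than to overcome any new difficulty. First I would recall that, by \eqref{closedformNE}, each equilibrium commitment is the conditional expectation $c_i^{ne} = \mathbb{E}[X_i \mid X_{\mc{N}} = c_{\mc{N}}^{\star}]$ for every $i = 1, \dots, N$. Summing these over $i$ and pulling the finite sum inside the (conditional) expectation by linearity gives
\begin{align}
c_{\mc{N}}^{ne} = \sum_{i=1}^N c_i^{ne} = \sum_{i=1}^N \mathbb{E}\!\left[ X_i \,\Big|\, X_{\mc{N}} = c_{\mc{N}}^{\star} \right] = \mathbb{E}\!\left[ \sum_{i=1}^N X_i \,\Big|\, X_{\mc{N}} = c_{\mc{N}}^{\star} \right]. \nonumber
\end{align}

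Next I would invoke the defining identity $X_{\mc{N}} = \sum_{i=1}^N X_i$ to rewrite the integrand, so that the right-hand side becomes $\mathbb{E}[ X_{\mc{N}} \mid X_{\mc{N}} = c_{\mc{N}}^{\star} ]$. Conditioning a random variable on the event that it equals a fixed value returns that value, whence $\mathbb{E}[ X_{\mc{N}} \mid X_{\mc{N}} = c_{\mc{N}}^{\star} ] = c_{\mc{N}}^{\star}$, yielding $c_{\mc{N}}^{ne} = c_{\mc{N}}^{\star}$. To close the argument I would observe that $c_{\mc{N}}^{\star} = \argmax_{c_\mc{N}} \expec[\mc{P}_\mc{N}]$ is, by definition (cf. Property~3 in Section~\ref{DesiredPropert}), precisely the efficient aggregate commitment, so the equality $c_{\mc{N}}^{ne} = c_{\mc{N}}^{\star}$ is exactly the statement that efficiency is achieved at the unique pure NE.

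There is no substantive obstacle here, since the entire difficulty was already absorbed into the proof of Theorem~\ref{thmNE}; the only point that warrants a line of care is the measure-theoretic justification of the step $\mathbb{E}[ X_{\mc{N}} \mid X_{\mc{N}} = c_{\mc{N}}^{\star}] = c_{\mc{N}}^{\star}$, which is unambiguous here because the paper assumes the joint density of $X_1, \dots, X_N$ exists and that $\mc{F}_{\mc{N}}^{-1}(\cdot)$ exists, so $c_{\mc{N}}^{\star}$ is a well-defined point in the support and conditioning on $\{X_{\mc{N}} = c_{\mc{N}}^{\star}\}$ is well-posed. I would therefore present the derivation as a short display chaining the three equalities above, with a single remark noting that linearity of conditional expectation and the trivial conditioning identity are the only facts used beyond Theorem~\ref{thmNE}.
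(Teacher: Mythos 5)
Your proof is correct and is essentially the paper's own: the corollary is presented there as an immediate consequence of the closed form \eqref{closedformNE}, via exactly your chain $\sum_i \mathbb{E}[X_i \mid X_{\mc{N}} = c_{\mc{N}}^{\star}] = \mathbb{E}[X_{\mc{N}} \mid X_{\mc{N}} = c_{\mc{N}}^{\star}] = c_{\mc{N}}^{\star}$. The only cosmetic difference is that in Appendix~\ref{thm1prf} the paper also obtains $c_{\mc{N}}^{ne} = c_{\mc{N}}^{\star}$ en route, by summing the $N$ stationarity conditions \eqref{eq:zeroderiv} before deriving the individual $c_i^{ne}$, but this is the same fact established one step earlier in the same derivation, not a different method.
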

In other words, Corollary \ref{coroeff} implies that the designed mechanism achieves an ideal ``Price of Anarchy'' of one \cite{AlgorithmGameTheory}. 

Furthermore, we have that all the remaining desired properties introduced in Section \ref{DesiredPropert} are also achieved: 
\begin{theorem} \label{thmstable}
Employing the PAM \eqref{NewPAM}, individual rationality, stability / in the core, and no collusion are achieved at the unique pure NE specified in \eqref{closedformNE}. 
\end{theorem}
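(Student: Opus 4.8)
The plan is to derive all three properties from a single structural fact about the expected payoff that the PAM \eqref{NewPAM} delivers to an arbitrary \emph{coalition} at the equilibrium \eqref{closedformNE}. Since individual rationality \eqref{IR} is exactly the core condition \eqref{ExAnteStronglyInCore} specialized to $\mc{T}=\{i\}$, I would prove the core property for a generic $\mc{T}\subseteq\mc{N}$ and obtain individual rationality as a corollary, then treat no collusion separately. \emph{Step 1 (coalition payoff at the NE).} Fix $\mc{T}$ and write $X_{\mc{T}}=\sum_{i\in\mc{T}}X_i$. By Theorem \ref{thmNE} and Corollary \ref{coroeff}, the aggregate commitment equals $c_{\mc{N}}^{\star}$ and $\sum_{i\in\mc{T}}c_i^{ne}=\expec[X_{\mc{T}}\mid X_{\mc{N}}=c_{\mc{N}}^{\star}]$. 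Crucially, in \eqref{NewPAM} the real-time price multiplying each $(x_i-c_i)$ depends only on the \emph{aggregate} event $X_{\mc{N}}\lessgtr c_{\mc{N}}^{\star}$ (the boundary has probability zero since $\mc{F}_{\mc{N}}^{-1}$ exists), so summing the per-RPP expected payoffs over $\mc{T}$ gives $\sum_{i\in\mc{T}}\pi_i=\phi_{\mc{T}}\big(\sum_{i\in\mc{T}}c_i^{ne}\big)$, where
\[
\phi_{\mc{T}}(c) := p^f c + p^{r,b}\,\expec\big[(X_{\mc{T}}-c)\mathbf{1}_{\{X_{\mc{N}}<c_{\mc{N}}^{\star}\}}\big] + p^{r,s}\,\expec\big[(X_{\mc{T}}-c)\mathbf{1}_{\{X_{\mc{N}}>c_{\mc{N}}^{\star}\}}\big].
\]

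\emph{Step 2 (the crux).} I would then show that $\phi_{\mc{T}}(c)$ is \emph{independent of} $c$. Its coefficient of $c$ is $p^f - p^{r,b}\,\prob(X_{\mc{N}}<c_{\mc{N}}^{\star}) - p^{r,s}\,\prob(X_{\mc{N}}>c_{\mc{N}}^{\star})$; because $c_{\mc{N}}^{\star}=\mc{F}_{\mc{N}}^{-1}\big(\tfrac{p^f-p^{r,s}}{p^{r,b}-p^{r,s}}\big)$ we have $\prob(X_{\mc{N}}<c_{\mc{N}}^{\star})=\tfrac{p^f-p^{r,s}}{p^{r,b}-p^{r,s}}$, which makes the averaged real-time price equal exactly $p^f$ and the coefficient vanish. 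Hence $\sum_{i\in\mc{T}}\pi_i=\phi_{\mc{T}}(c)$ for \emph{every} $c$. \emph{Step 3 (pointwise domination).} For any $c$, subtracting $\pi_{\mc{T}}^{sep}(c)$ and using $\mathbf{1}_{\{X_{\mc{N}}>c_{\mc{N}}^{\star}\}}=1-\mathbf{1}_{\{X_{\mc{N}}<c_{\mc{N}}^{\star}\}}$ (and the analogue for $X_{\mc{T}}$) collapses the difference to
\[
\phi_{\mc{T}}(c)-\pi_{\mc{T}}^{sep}(c) = (p^{r,b}-p^{r,s})\,\expec\big[(X_{\mc{T}}-c)\big(\mathbf{1}_{\{X_{\mc{N}}<c_{\mc{N}}^{\star}\}}-\mathbf{1}_{\{X_{\mc{T}}<c\}}\big)\big].
\]
Splitting the indicator difference over $\{X_{\mc{N}}<c_{\mc{N}}^{\star},\,X_{\mc{T}}\ge c\}$ and $\{X_{\mc{N}}\ge c_{\mc{N}}^{\star},\,X_{\mc{T}}<c\}$, the integrand is nonnegative on each piece (on the first $X_{\mc{T}}-c\ge 0$; on the second $X_{\mc{T}}-c<0$ against a $-1$), and $p^{r,b}-p^{r,s}\ge 0$ by no arbitrage, so $\phi_{\mc{T}}(c)\ge\pi_{\mc{T}}^{sep}(c)$ for all $c$.

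Combining Steps 2 and 3 gives $\sum_{i\in\mc{T}}\pi_i=\phi_{\mc{T}}(c)\ge\pi_{\mc{T}}^{sep}(c)$ for all $c$, hence $\sum_{i\in\mc{T}}\pi_i\ge\max_c\pi_{\mc{T}}^{sep}(c)=\pi_{\mc{T}}^{sep}(c_{\mc{T}}^{\star,sep})$, which is \eqref{ExAnteStronglyInCore}; the case $\mc{T}=\{i\}$ yields \eqref{IR}. \emph{Step 5 (no collusion).} When the members of $\mc{T}$ merge into one player, the total generation $X_{\mc{N}}$ and therefore $c_{\mc{N}}^{\star}$ are unchanged, so Theorem \ref{thmNE} applied to the regrouped game gives the merged player's equilibrium commitment $\tilde c_{\mc{T}}^{ne}=\expec[X_{\mc{T}}\mid X_{\mc{N}}=c_{\mc{N}}^{\star}]=\sum_{i\in\mc{T}}c_i^{ne}$ while leaving each remaining $\tilde c_j^{ne}=c_j^{ne}$ (this also substantiates the earlier remark). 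Its expected payoff is then exactly $\phi_{\mc{T}}(\tilde c_{\mc{T}}^{ne})=\sum_{i\in\mc{T}}\pi_i$, so \eqref{ExAnteNoGamProofPart1} holds with equality and there is no strict incentive to collude.

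The main obstacle is Step 2: recognizing, and proving cleanly, that a coalition's expected allocated payoff is \emph{flat} in its own commitment, precisely because the real-time price averaged over the aggregate shortfall/surplus events at the efficient quantile equals $p^f$. Once this flatness is established, the remaining work is the elementary pathwise sign argument of Step 3 and the invariance-under-regrouping observation of Step 5, both of which are routine.
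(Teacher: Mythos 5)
Your proposal is correct, but it reaches individual rationality and the core by a genuinely different route than the paper. The paper never computes the coalition's equilibrium payoff in closed form: it proves the pointwise ex-post inequality $\mc{P}_i \geq \mc{P}_i^{sep}$ (its \eqref{ExExPostIndRatinlty}), takes expectations, evaluates at $c_i = c_i^{\star,sep}$, and then invokes the best-response property of the NE \eqref{defNE} to lift the bound from $c_i^{\star,sep}$ to $c_i^{ne}$; the core is then obtained by first establishing the no-collusion identity \eqref{ExAnteNoGamProof} (merged-game NE invariance plus pathwise payoff equality --- exactly your Step 5) and applying individual rationality \emph{to the merged player in the new game}. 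You instead evaluate the coalition's NE payoff explicitly as $\phi_{\mc{T}}(c_{\mc{T}}^{ne})$, observe the quantile identity that kills the slope of $\phi_{\mc{T}}$ (the event-averaged real-time price at $c_{\mc{N}}^{\star}$ equals $p^f$ precisely because $\mc{F}_{\mc{N}}(c_{\mc{N}}^{\star}) = \frac{p^f - p^{r,s}}{p^{r,b}-p^{r,s}}$), and dominate $\pi_{\mc{T}}^{sep}(c)$ uniformly in $c$ via the pointwise worst-case-settlement bound --- your Step 3 is in essence the expectation of \eqref{ExExPostIndRatinlty} applied to the coalition as a single player. Each approach buys something: the paper's is shorter and reuses its ex-post machinery, but its core proof routes through the merged game and hence tacitly requires the new game to possess a pure NE (condition \eqref{conditionNEeq} for the merged generation $X_{\mc{T}}$ does \emph{not} follow from the per-RPP condition, since the per-player derivatives only sum to at most $|\mc{T}|$; the paper glosses over this, and your Step 5 inherits the same tacit assumption for no collusion). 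Your core argument, by contrast, never leaves the original game --- it needs only the closed form \eqref{closedformNE}, efficiency, and the flatness identity --- so it is slightly more robust on exactly this point, and it makes transparent \emph{why} stability holds at the efficient quantile.

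One phrasing to repair before you call Step 2 "the crux": in your closing paragraph you assert that "a coalition's expected allocated payoff is flat in its own commitment." That is false of the actual game payoff $\tilde{\pi}_{\mc{T}}\left(c, \left\{c_{-\mc{T}}^{ne}\right\}\right)$ --- if it were flat, every commitment would be a best response --- because an actual deviation shifts the shortfall/surplus events $\left\{X_{\mc{N}} < c_{\mc{N}}\right\}$ along with $c$. What is flat is your auxiliary function $\phi_{\mc{T}}(c)$, in which the events are frozen at $c_{\mc{N}}^{\star}$ while only the commitment variable moves; $\phi_{\mc{T}}$ coincides with the true coalition payoff only at $c = c_{\mc{T}}^{ne}$. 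Your Steps 1--3 use only this, so the proof stands as written (and the boundary event $X_{\mc{N}} = c_{\mc{N}}^{\star}$ is indeed null since the joint pdf exists), but the prose should make the distinction explicit.
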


As a result of Theorems \ref{thmNE}, \ref{conditionNE}, \ref{thmstable} and Corollary \ref{coroeff}, we conclude that the proposed PAM \eqref{NewPAM} induces a \emph{unique NE} among the RPPs, expressed in \emph{closed form} \eqref{closedformNE}, which is both \emph{efficient} and \emph{stable} (i.e., in the core from a coalitional game perspective) for the entire aggregation, and guarantees no collusion. 
We further note that, interestingly, results and techniques similar to our findings have independently been developed for an energy storage sharing problem in the recent works \cite{kalathil2017sharing} and \cite{wu2016sharing}.


%
%
%
%

Lastly, while the proposed PAM \eqref{NewPAM} is shown to achieve all the desired properties, an interesting question is how its specific form is discovered. 
For this, we refer the readers to Appendix \ref{sec:market}, 
in which we show the PAM \eqref{NewPAM} can be derived from a competitive equilibrium of a specially formulated market with transferrable payoff \cite{GameBook}. 

\section{Analysis and Proofs of the Main Results} \label{sec:ana} 

\subsection{Understanding the Proposed PAM}

\subsubsection{The Excess Payoff from Aggregation}
We first examine the \emph{excess payoff} from aggregating the RPPs given a set of DA commitments $\{c_i\}$, i.e., the difference between a) the realized payoff of the aggregation, and b) the sum of the realized payoffs of the RPPs had they separately participated in the DA-RT market using the \emph{same} DA commitments $\{c_i\}$. 

We define the following notations for the (realization dependent) sets of RPPs with generation surpluses and shortfalls: 
\begin{align} 
&\mc{S}^+ \triangleq \left\{ i \in \mc{N} ~|~ x_i - c_i \geq 0 \right\}, ~ \mc{S}^- \triangleq \left\{ i \in \mc{N} ~|~ x_i - c_i < 0 \right\}    \nonumber \\
&c_{\mc{S}^+} \! \triangleq \! \sum_{i \in \mc{S}^+} c_i , ~ x_{\mc{S}^+} \!\triangleq\! \sum_{i \in \mc{S}^+} x_i ,~ c_{\mc{S}^-} \!\triangleq\! \sum_{i \in \mc{S}^-} c_i , ~  x_{\mc{S}^-} \!\triangleq\! \sum_{i \in \mc{S}^-} x_i. \nn
\end{align}
For convenience, we define $c_\emptyset = x_\emptyset = 0$. We then have the following lemma on expressing the excess payoff in terms of the above notations, whose proof is relegated to Appendix \ref{AppndxExcssPayoffOfAggrgtr}. 
\begin{lemma} \label{lemexprof}
	The excess payoff from aggregating the RPPs is
	\begin{align}   \label{ExcessProfit}
	\mc{P}_{\mc{N}} &- \sum_{i \in \mc{N}} \mc{P}_i^{sep}    \nonumber \\
	&=\left(p^{r,b} - p^{r,s}\right) \min \left( \left(x_{\mc{S}^+} - c_{\mc{S}^+}\right) , \left(c_{\mc{S}^-} - x_{\mc{S}^-}\right) \right)
	\end{align}
\end{lemma}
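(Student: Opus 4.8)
The plan is to compute the difference $\mc{P}_\mc{N} - \sum_{i\in\mc{N}} \mc{P}_i^{sep}$ directly and reduce it to a short case analysis on whether the \emph{aggregate} generation is in surplus or in shortfall. First I would observe that since $c_\mc{N} = \sum_{i} c_i$, the forward-market terms cancel exactly: $p^f c_\mc{N} = \sum_i p^f c_i$, so only the real-time settlement terms survive the subtraction.

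Next I would rewrite the sum of the individual real-time terms using the partition $\mc{N} = \mc{S}^+ \cup \mc{S}^-$. For $i \in \mc{S}^+$ we have $(c_i - x_i)_+ = 0$ and $(x_i - c_i)_+ = x_i - c_i$, while for $i \in \mc{S}^-$ the reverse holds. Hence
\begin{align}
\sum_{i\in\mc{N}} \Big[ -p^{r,b}(c_i - x_i)_+ + p^{r,s}(x_i - c_i)_+ \Big] = p^{r,s}\left(x_{\mc{S}^+} - c_{\mc{S}^+}\right) - p^{r,b}\left(c_{\mc{S}^-} - x_{\mc{S}^-}\right). \nonumber
\end{align}
It is convenient to abbreviate the aggregate surplus $A \triangleq x_{\mc{S}^+} - c_{\mc{S}^+} \geq 0$ and the aggregate shortfall $B \triangleq c_{\mc{S}^-} - x_{\mc{S}^-} \geq 0$, so that the right-hand side equals $p^{r,s} A - p^{r,b} B$, and moreover $x_\mc{N} - c_\mc{N} = A - B$.

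Then I would handle the aggregate settlement term $-p^{r,b}(c_\mc{N} - x_\mc{N})_+ + p^{r,s}(x_\mc{N} - c_\mc{N})_+$ by splitting on the sign of $A - B$. If $A \geq B$ the aggregate has a net surplus and this term equals $p^{r,s}(A - B)$; subtracting $p^{r,s}A - p^{r,b}B$ leaves $(p^{r,b} - p^{r,s})B = (p^{r,b}-p^{r,s})\min(A,B)$. If $A < B$ the aggregate has a net shortfall, the term equals $p^{r,b}(A - B)$, and subtracting the same quantity leaves $(p^{r,b}-p^{r,s})A = (p^{r,b}-p^{r,s})\min(A,B)$. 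In both cases the difference equals $(p^{r,b}-p^{r,s})\min(A,B)$, which is precisely the claimed identity \eqref{ExcessProfit}.

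There is no deep obstacle here: the whole statement is an algebraic identity once the forward terms are cancelled. The only point requiring care is the bookkeeping in the case split, where one must check that the ``winner'' of the $\min$ is exactly the smaller of the total surplus $A$ and the total shortfall $B$, which is what collapses the two cases into a single closed form. Intuitively, $(p^{r,b}-p^{r,s})\min(A,B)$ is the value of the internally matched volume: the portion of one RPP's surplus that covers another's shortfall is settled at the favorable side of the price spread rather than being bought at $p^{r,b}$ and simultaneously sold at $p^{r,s}$.
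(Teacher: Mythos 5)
Your proof is correct and follows essentially the same route as the paper's: cancel the forward terms, split the individual real-time settlements over the partition $\mc{S}^+ \cup \mc{S}^-$, and case-split on whether the aggregate surplus $x_{\mc{S}^+} - c_{\mc{S}^+}$ exceeds the aggregate shortfall $c_{\mc{S}^-} - x_{\mc{S}^-}$. The only cosmetic difference is that you work out both cases explicitly (the paper proves one and notes the other is symmetric), and your closing remark about the internally matched volume being settled at the favorable side of the spread is a nice added intuition.
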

%
Lemma \ref{lemexprof} implies that the excess payoff from aggregation is \emph{always non-negative}. The excess payoff is zero if the RPPs either all have excesses or all have shortfalls, i.e., no compensation happens among the RPPs.

\subsubsection{Intuition of the Proposed PAM}
To understand the proposed PAM \eqref{NewPAM}, 
let us consider the following two cases: 
\begin{itemize}
\item \emph{Case 1: The aggregation has a shortfall in total, i.e. $x_{\mathcal{N}} - c_{\mathcal{N}} < 0$}. In this case, 
$\mc{P}_i = \mc{P}_i^{sep}, \forall i\in \mc{S}^-$, i.e., those RPPs with a shortfall earns exactly the same as if they each participates in the market separately. In comparison, $\forall i\in \mc{S}^+,~ \mc{P}_i - \mc{P}_i^{sep} = (p^{r,b} - p^{r,s})(x_i - c_i)$. As a result, only those RPPs in $S^+$ can gain extra earnings compared to if they participate in the markets separately. In other words, \emph{when the aggregation has a shortfall, all the excess payoff \eqref{ExcessProfit} are allocated to those RPPs with a surplus}. 

\item \emph{Case 2: The aggregation has a surplus in total, i.e. $x_{\mathcal{N}} - c_{\mathcal{N}} > 0$}. In this case, $\mc{P}_i = \mc{P}_i^{sep}, \forall i\in \mc{S}^+$, i.e., those RPPs with a surplus earn exactly the same as if they participate in the market separately. In comparison, $\forall i\in \mc{S}^-, ~\mc{P}_i - \mc{P}_i^{sep} = (p^{r,b} - p^{r,s})(c_i - x_i)$. As a result, \emph{when the aggregation has a surplus, all the excess payoff \eqref{ExcessProfit} are allocated to those RPPs with a shortfall}. While this may seem unintuitive at first glance, it can be understood as only rewarding those RPPs who ``reduce the total deviation'', even when the total deviation is a surplus and reducing it means having a shortfall.  

\end{itemize}
\begin{RK}[Marginal Profit]
The proposed PAM \eqref{NewPAM} can also be understood as follows: At RT, given the total (possibly negative) extra generation $x_\mc{N}-c_\mc{N}$ from the aggregation of RPPs, if an additional infinitesimal unit of energy is generated \emph{by RPP $i$}, the resulting additional profit \emph{the aggregation earns} dictates the price of (possibly negative) extra generation $x_i - c_i$ for RPP $i$. 
\end{RK}

\subsection{Proofs of the Main Results} 
%

We relegate the proof of Theorems \ref{thmNE} {and \ref{conditionNE}} to Appendices \ref{thm1prf} {and \ref{AppdxProofconditionNE}} due to their mainly algebraic nature. In the following, we present the proof of Theorem \ref{thmstable}. 
\begin{proof}[Proof of Theorem \ref{thmstable}]
We will first show that the proposed mechanism \eqref{NewPAM} achieves \emph{individual rationality} and \emph{no-collusion}, which would then be used to prove that \emph{stability / in the core} is further achieved. 
		
\begin{enumerate}
\item[a.] \emph{Individual Rationality}: 

		Comparing \eqref{PayoffSep} with \eqref{NewPAM}, we immediately have the following inequality: $\forall i\in \mc{N}, \{c_i\}$ and $\{x_i\}, $
			\begin{align}   \label{ExExPostIndRatinlty}
			& \hspace{20pt} \mc{P}_i \left( \{c_i\} , \left\{ x_{i} \right\} \right)  \geq  \mc{P}_i^{sep} \left( c_i , x_i \right). 
			\end{align} 
As a heads up, we term \eqref{ExExPostIndRatinlty} \emph{ex-post restricted individual rationality}, which we will describe in detail later in Section \ref{expostpropert} (cf. Property 1 therein). 
			
			By taking expectation of \eqref{ExExPostIndRatinlty} over $\{X_i\}$, we have
			\begin{align}   \label{ExAnteIndFromExPost}
			\forall i\in \mc{N} \mbox{ and } \{c_i\} , \ \ \pi_i \left( c_i , \left\{ c_{-i} \right\} \right)  \geq  \pi_i^{sep} \left( c_i \right). 
			\end{align} 
			We then apply this inequality for the following specific choice of $\{c_i\}: ~ c_i = c_i^{\star, sep} \mbox{ (cf. \eqref{optsep}) and } \{c_{-i}\} = \{c_{-i}^{ne}\}$:   
			\begin{align}  \label{proofExAnteIndivRationLemmaPart2}
			\forall i\in \mc{N}, ~~\pi_i \left( c_i^{\star , sep} , \left\{ c_{-i}^{ne} \right\} \right)  \geq   \pi_i^{sep} \left( c_i^{\star , sep} \right). 
			\end{align}
			
			On the other hand, from the \emph{best responding} property in the definition of NE \eqref{defNE}, we have 
			\begin{align}  \label{proofExAnteIndivRationLemmaPart1} 
			\pi_i \left( c_i^{ne} , \left\{ c_{-i}^{ne} \right\} \right)  \geq  \pi_i \left( c_i^{\star , sep} , \left\{ c_{-i}^{ne} \right\} \right)
			\end{align}
			Combining \eqref{proofExAnteIndivRationLemmaPart1} and \eqref{proofExAnteIndivRationLemmaPart2}, we have 
			\begin{align}  \label{proofExAnteIndivRationLemmaPart3}
			\pi_i \left( c_i^{ne} , \left\{ c_{-i}^{ne} \right\} \right)  \geq  \pi_i^{\star , sep} \left( c_i^{\star , sep} \right),
			\end{align}
			which completes the proof of individual rationality. 
		%
		%
		\item[b.] \emph{No Collusion}:
		%

			%
			
			To prove no collusion \eqref{ExAnteNoGamProofPart1}, we begin with showing that 
			\begin{align} 
			\tilde{c}_{\mc{T}}^{ne}  =  \sum_{ i \in \mc{T}} c_i^{ne}, ~\forall \mc{T\subseteq \mc{N}}. 
			\end{align} 
			Without loss of generality (WLOG), consider RPPs $1, 2, \cdots, T$ form as a joint player $\mc{T}$. 
			Based on Theorem \ref{thmNE}, from the closed form expression of the unique pure NE \eqref{closedformNE}, we have
			\begin{align}
			\tilde{c}_{\mc{T}}^{ne} &= \mathbb{E} 
			\left[ X_{\mc{T}}  \Big| X_{\mc{N}} = c_{\mc{N}}^{\star} \right]  =  \mathbb{E}
			\left[ \sum_{ j \in \mc{T}} X_j \Big| X_{\mc{N}} = c_{\mc{N}}^{\star} \right] \nonumber \\ 
			&= \sum_{ j \in \mc{T}} \mathbb{E}
			\left[ X_j \Big| X_{\mc{N}} = c_{\mc{N}}^{\star} \right]  
			= \sum_{ i \in \mc{T}} c_i^{ne}. 
			\end{align}
			Similarly, in the \emph{new} game with RPPs in $\mc{T}$ 
			joining as a single player, the \emph{remaining} RPPs' commitments at the new NE stay \emph{the same} as at the original game's NE:
			\begin{align}
			\forall i\notin \mc{T}, ~ \tilde{c}_i^{ne} = \mathbb{E} \left[ X_i  \Big| X_{\mathcal{N}} = c_{\mathcal{N}}^{\star} \right] = {c}_i^{ne}. 
			\end{align}
			Now, from the special piece-wise linear structure of the proposed PAM \eqref{NewPAM}, it is straightforward to verify that
			\begin{align}
			\tilde{\mc{P}}_{\mc{T}} &\left( \tilde{c}_{\mc{T}}^{ne}, \left\{\tilde{c}_{-\mc{T}}^{ne}\right\} , \left\{x_j \right\} \right)  
			= \sum_{ i \in \mc{T}} \mc{P}_i \left( \{c_i^{ne}\}, \{x_j\} \right). \nn
			\end{align} 
			In other words, the total payoff allocated to the RPPs in $\mc{T}$ 
			remains \emph{the same} before and after they form as a joint player. 
			As this holds \emph{in all circumstances}, 
			it also holds \emph{in expectation}:  
			\begin{align}   \label{ExAnteNoGamProof}
			\hspace{0pt} \tilde{\pi}_{\mc{T}} \left(\tilde{c}_{\mc{T}}^{ne} , \left\{ \tilde{c}_{-\mc{T}}^{ne} \right\}\right) = \sum_{i \in \mc{T}} \pi_i \left(c_i^{ne}, \left\{c_{-i}^{ne}\right\}\right). 
			\end{align}
			As a result, the proposed PAM \eqref{NewPAM} achieves the no collusion property \eqref{ExAnteNoGamProofPart1}: In particular, the inequality is always achieved by equality \eqref{ExAnteNoGamProof}. We term equation \eqref{ExAnteNoGamProof} the \emph{No Collusion Equation}. 
			\item[c.] \emph{Stability / In the Core:} ~
			We now show that individual rationality and no collusion collectively implies stability of the proposed PAM (cf. Property 5 in Section \ref{DesiredPropert}). 

			For any subset of RPPs $\mc{T}\subseteq \mc{N}$, consider the hypothetical case of them \emph{joining as a single player} to aggregate with the remaining RPPs $\mc{T}\backslash \mc{N}$ under the proposed PAM \eqref{NewPAM}. \emph{Applying individual rationality (cf. \eqref{IR}) specifically to this joint player}, we have 
			\begin{align}  \label{ExAnteInCorePart2}
			\tilde{\pi}_{\mc{T}} \left( \tilde{c}_{\mc{T}}^{ne} , \tilde{c}_{-\mc{T}}^{ne} \right)  \geq  \pi_{\mc{T}}^{\star , sep} \left( c_{\mc{T}}^{\star , sep} \right). 
			\end{align}
			From \eqref{ExAnteInCorePart2} and the No Collusion Equation \eqref{ExAnteNoGamProof}, we have 
			\begin{align}  \label{ExAnteInCorePart3}
			\sum_{i \in \mc{T}} \pi_i \left(c_i^{ne}, \left\{c_{-i}^{ne}\right\}\right) \geq  \pi_{\mc{T}}^{\star , sep} \left( c_{\mc{T}}^{\star , sep} \right), 
			\end{align}
			completing the proof of Property 5) - Stability / In the Core - of the proposed PAM \eqref{NewPAM}. 
		
\end{enumerate}
\vspace{-13pt}
			\end{proof}
	
\section{Ex-post Properties of The Proposed Mechanism}  \label{ExPostPropThePropMechan}
In this section, we present another set of desirable properties achieved by the proposed PAM \eqref{NewPAM}: These properties are termed ``ex-post'' properties because they are achieved for \emph{all possible realizations} of the random power generation $\{X_i\}$. These properties, however, are distinguished from those discussed in Section \ref{DesiredPropert} by a key caveat --- an assumption on the RPPs' DA commitments, as will be described next. 

\subsection{A Specialized Coalitional Game}

{Given any set of DA commitments $\{c_i\}$ and realizations of generation $\{x_i\}$}, similar to the realized payoffs \eqref{PayoffSep} and \eqref{PayoffAgg}, we define a function $v(\cdot)$ for the value of a coalition of any subset of RPPs as follows: $\forall \mathcal{T}\subseteq\mc{N}$,
\begin{align}  \label{ValOfCoal}
v \left(\mathcal{T}\right) &= p^f \hat{c}_\mathcal{T} - p^{r,b} \left(\hat{c}_\mathcal{T} - x_\mathcal{T}\right)_+ + p^{r,s} \left(x_\mathcal{T} - \hat{c}_\mathcal{T}\right)_+ 
\end{align}
where $\hat{c}_{\mathcal{T}} \triangleq \sum_{i \in \mathcal{T}}{c_i}$ and $x_{\mathcal{T}} \triangleq \sum_{i \in \mathcal{T}}{x_i}$. 
With the above value function $v(\cdot)$, designing the PAM $\{\mc{P}_i(\{c_i\}, \{x_i\}), i \in\mc{N}\}$ 
can be studied in a well-defined coalitional game \cite{GameBook}. In particular, in a coalitional game, a PAM $\{\mc{P}_i\}$ is said to be \emph{stable/in the core} if and only if it satisfies the following set of inequalities: 
	\begin{align}   \label{DefCoreGame}
	\forall \mc{T} \subseteq \mathcal{N}, ~  \sum_{i \in \mc{T}} \mc{P}_i \geq v\left(\mc{T}\right). 
	\end{align}
In other words, the total payoffs allocated to any subset of the players should be no less than the value of this subset. In particular, here the value of a subset of RPPs $\mc{T}$ \eqref{ValOfCoal} has a specific meaning --- the realized payoff of the subset of RPPs $\mc{T}$ had they left the aggregation and collectively participate in the DA-RT markets \emph{with a specific total commitment --- $\hat{c}_{\mathcal{T}} = \sum_{i \in \mathcal{T}}{c_i}$}. As such, if a PAM $\{\mc{P}_i\}$ is in the core (cf. \eqref{DefCoreGame}), any subset of RPPs $\mc{T}$ in the aggregation earn at least as much as they would otherwise earn outside the aggregation \emph{provided that they stick to the same total DA commitments as they do inside the aggregation.}

\begin{RK}[Restrictive Assumption on DA Commitments] \label{RKrestr}
As with the previously discussed stability property in Section \ref{DesiredPropert}, one would ideally like stability / in the core be satisfied \emph{without any restriction} on how a subset of RPPs determine their DA commitments. The requirement of $\hat{c}_{\mathcal{T}} = \sum_{i \in \mathcal{T}}{c_i},~\forall \mc{T}\subseteq\mc{N}$ in this section is hence a \emph{restrictive} one, as it does not allow a subset of RPPs leaving the aggregation to re-adjust their DA commitments. This assumption nonetheless leads to a set of ``ex-post'' properties of the proposed PAM \eqref{NewPAM} in the following. 
\end{RK}

\subsection{Ex-post Restricted Stability and No Collusion} \label{expostpropert}
We now describe the properties achieved by the proposed PAM \eqref{NewPAM} in an ``ex-post'' sense, meaning that they hold for all possible $\{c_i\}$ and renewable generation realizations $\{x_i\}$. 
\begin{enumerate}
	\item \emph{Ex-post restricted individual rationality}: $\mathcal{P}_i\left(\{c_i\}, \{x_i\}\right)  \geq \mathcal{P}_i^{sep}(c_i,x_i), ~\forall i \in\mc{N},$ (cf. \eqref{PayoffSep} and \eqref{NewPAM}, and mentioned earlier as \eqref{ExExPostIndRatinlty}). In other words, the payoff of RPP $i$ is at least as high as the payoff it could have gotten had it separately participated in the DA-RT market \emph{with the same DA commitment $c_i$ as originally submitted to the aggregator}. 
	\item \emph{Ex-post restrictedly stable/in the core}: Being restrictedly stable/in the core in an ``ex-post'' sense is defined by \eqref{DefCoreGame} and \eqref{ValOfCoal}. In other words, 
if any subset of the RPPs $\mc{T}$ leave the aggregation, separately form their own aggregation, and then participate in the market \emph{with the same sum of DA commitments $\hat{c}_{\mathcal{T}} = \sum_{i\in\mc{T}}c_i$ as originally submitted to the aggregator}, they would get a realized payoff no higher than the sum of their realized payoffs originally from the PAM. 
	\item \emph{Ex-post restricted no collusion}: Suppose any subset of RPPs $\mc{T}$ join together as a single player before participating in the aggregation with the remaining RPPs, and submit the same sum of DA commitments $\hat{c}_{\mathcal{T}} = \sum_{i\in\mc{T}}c_i$ to the aggregator. Their total realized payoff would be \emph{no higher} than the sum of their original realized payoffs from the PAM. Rigorously, $\forall \mc{T}$, 
\begin{align}
 \sum_{i \in \mc{T}} \mc{P}\left(\{c_i\}, \{x_i\}\right) \ge \tilde{\mc{P}}_\mc{T} \left(\hat{c}_\mc{T}, \{c_{-\mc{T}}\}, \{x_i\}\right), 
\end{align}
where $\tilde{\mc{P}}_\mc{T} \left(\hat{c}_\mc{T}, \{c_{-\mc{T}}\}, \{x_i\}\right)$ is the \emph{new} payoff of a subset of RPPs $\mc{T}$ if they a) join as a single player, and then b) aggregate with the remaining RPPs $\mc{N} \backslash \mc{T}$ under the proposed PAM \eqref{NewPAM}, employing the original total commitment of $\hat{c}_{\mathcal{T}} = \sum_{i\in\mc{T}}c_i$.

\end{enumerate}
We note that the reason for the above properties to be termed as ``restricted'' ones is the assumption of $\hat{c}_{\mathcal{T}} = \sum_{i\in\mc{T}}c_i,~\forall \mc{T}\subseteq\mc{N}$ (cf. Remark \ref{RKrestr}). 
We now have the following theorem. 
\begin{theorem} \label{expostthm}
Employing the PAM \eqref{NewPAM}, ex-post restricted individual rationality, stability / in the core, and no collusion are achieved for all possible $\{c_i\}$ and $\{x_i\}$. 
\end{theorem}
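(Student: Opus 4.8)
The plan is to mirror the structure of the proof of Theorem~\ref{thmstable}, but to carry every inequality out \emph{realization-by-realization} rather than in expectation, exploiting the piecewise-linear form of the PAM \eqref{NewPAM}. The single structural fact that drives all three properties is that, for \emph{any} subset $\mc{T}\subseteq\mc{N}$, summing \eqref{NewPAM} over $i\in\mc{T}$ collapses the per-RPP allocations into
\begin{align}
\sum_{i\in\mc{T}}\mc{P}_i = p^f\hat{c}_{\mc{T}} + p\,(x_{\mc{T}}-\hat{c}_{\mc{T}}),\nn
\end{align}
where the single price $p\in\{p^{r,b},p^*,p^{r,s}\}\subseteq[p^{r,s},p^{r,b}]$ is selected only by the \emph{aggregate} state $x_{\mc{N}}-c_{\mc{N}}$ and is therefore common to all players. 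This additive-with-a-shared-price decomposition is the crux, and essentially the only nontrivial observation; once it is in hand each ex-post claim follows by a short argument, so I anticipate careful case bookkeeping rather than a genuine obstacle.

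First I would dispatch \emph{ex-post restricted individual rationality}, which is already the inequality \eqref{ExExPostIndRatinlty}. It follows from a short case split on the signs of $x_i-c_i$ and $x_{\mc{N}}-c_{\mc{N}}$: comparing the price $p$ that \eqref{NewPAM} applies to the deviation $x_i-c_i$ against the price that \eqref{PayoffSep} applies to the same deviation, and using $p^{r,s}\le p^*\le p^{r,b}$, one checks in every case that replacing the separate price by $p$ can only raise the payoff (a larger price multiplies a surplus more favorably, since $x_i-c_i\ge 0$ is met with $p\ge p^{r,s}$; a smaller price multiplies a shortfall more favorably, since $x_i-c_i<0$ is met with $p\le p^{r,b}$). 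Hence $\mc{P}_i\ge\mc{P}_i^{sep}$ pointwise.

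Next I would prove \emph{ex-post restricted no collusion} by establishing the realization-wise analogue of the No Collusion Equation \eqref{ExAnteNoGamProof}. When the RPPs in $\mc{T}$ merge into one player submitting $\hat{c}_{\mc{T}}=\sum_{i\in\mc{T}}c_i$ while the others keep $\{c_{-\mc{T}}\}$, neither $c_{\mc{N}}$ nor $x_{\mc{N}}$ changes, so the \emph{same} aggregate price $p$ is triggered; since the joint player's deviation is $x_{\mc{T}}-\hat{c}_{\mc{T}}=\sum_{i\in\mc{T}}(x_i-c_i)$, the displayed decomposition gives $\tilde{\mc{P}}_{\mc{T}}=\sum_{i\in\mc{T}}\mc{P}_i$ exactly. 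Thus the no-collusion inequality holds, with equality, for every $\{c_i\}$ and $\{x_i\}$.

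Finally, \emph{ex-post restricted stability/in the core} follows by combining the two previous steps, exactly as in Theorem~\ref{thmstable}. The point to notice is that the coalition value $v(\mc{T})$ in \eqref{ValOfCoal} is precisely the ex-post separate payoff of the merged player with commitment $\hat{c}_{\mc{T}}$ and generation $x_{\mc{T}}$; hence applying the already-proven ex-post individual rationality \eqref{ExExPostIndRatinlty} to this single joint player yields $\tilde{\mc{P}}_{\mc{T}}\ge v(\mc{T})$. Chaining this with the no-collusion equality $\sum_{i\in\mc{T}}\mc{P}_i=\tilde{\mc{P}}_{\mc{T}}$ gives $\sum_{i\in\mc{T}}\mc{P}_i\ge v(\mc{T})$, which is the core condition \eqref{DefCoreGame}. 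The only subtleties worth checking carefully are the two bookkeeping identities used here---that $v(\mc{T})$ equals the separate payoff of $\mc{T}$ treated as one player, and that the aggregate state $x_{\mc{N}}-c_{\mc{N}}$ is invariant under the merge---both of which are immediate from the definitions.
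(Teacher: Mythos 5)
Your proof is correct, and while your treatment of ex-post restricted individual rationality and no collusion matches the paper's (the paper dismisses both as ``straightforward to verify'' from \eqref{PayoffSep} and \eqref{NewPAM}, which is exactly the pointwise price comparison and merge-invariance you spell out), your argument for ex-post restricted stability takes a genuinely different route. The paper proves \eqref{DefCoreGame} via the competitive-equilibrium machinery of Appendix \ref{sec:market}: it identifies the coalitional game \eqref{ValOfCoal} with a market with transferrable payoff (Lemma \ref{equgame}), computes a competitive equilibrium and shows the competitive payoffs reproduce \eqref{NewPAM} (Theorem \ref{thmCE}), and then invokes the general fact that competitive payoffs lie in the core (Theorem \ref{thmCEcore}). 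You instead argue elementarily from the shared-price decomposition $\sum_{i\in\mc{T}}\mc{P}_i = p^f\hat{c}_{\mc{T}} + p\left(x_{\mc{T}}-\hat{c}_{\mc{T}}\right)$, with $p\in[p^{r,s},p^{r,b}]$ fixed by the sign of $x_{\mc{N}}-c_{\mc{N}}$, and this works: since $v(\mc{T})$ is exactly the separate payoff of the merged player with commitment $\hat{c}_{\mc{T}}$, your chain $\sum_{i\in\mc{T}}\mc{P}_i=\tilde{\mc{P}}_{\mc{T}}\ge v(\mc{T})$ is valid --- in fact the detour through the merged player is slightly more than needed, as $\sum_{i\in\mc{T}}\mc{P}_i\ge v(\mc{T})$ follows in one step from the decomposition plus the same price comparison ($p\ge p^{r,s}$ against a coalition surplus, $p\le p^{r,b}$ against a coalition shortfall) that you used for individual rationality. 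What your route buys: a short, self-contained verification using no coalitional-game or market theory, which also delivers the exact pointwise no-collusion identity (the realization-wise analogue of \eqref{ExAnteNoGamProof}) as a byproduct. What the paper's route buys: it is not merely a verification but an explanation of where \eqref{NewPAM} comes from --- the PAM is \emph{derived} as the competitive payoff profile --- and, via concavity of the production functions, it gives non-emptiness of the core of the market game as a general structural fact rather than a consequence of the specific piecewise-linear form.
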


It is straightforward to verify from \eqref{PayoffSep} and \eqref{NewPAM} that ex-post restricted individual rationality and no collusion both hold. To prove ex-post restricted stability/in the core, we again refer the readers to Appendix \ref{sec:market}, in which stability/in the core is implied by the competitive equilibrium of a specially formulated market with transferrable payoff. 
We further note that, interestingly, results similar to the proposed PAM \eqref{NewPAM} in achieving ex-post properties have independently been developed in a recent work \cite{chakraborty2017cost}, using a cost causation based analysis. 


%
%
%
%
%
%
%

%
%
%

%

%
\section{Simulation}  \label{Simul}
\subsection{Data Description and Simulation Setup} \label{simmodel}

We perform simulations using the NREL dataset \cite{NREL2010} for ten wind power producers (WPPs) located in the PJM interconnection. For each WPP, both the hourly DA forecasts and the actual realized generation are available from the data set. 
The generation of the WPPs for each hour $t$ are modeled as 
\begin{align}
W_i \left(t\right) = \widehat{W}_i \left(t\right) + \epsilon_i \left(t\right), ~\forall i, \nn  
\end{align}
where $ \widehat{W}_i $ is the (point) forecast generation of WPP $i$, and $ \epsilon_i$ is the forecast error. For simplicity, we consider the WPPs modeling their forecast errors using a zero mean jointly Gaussian distribution, $N(0, \Sigma)$. 
We fit the covariance matrix $\Sigma$ using the real data of these ten WPPs in Jan. 2004. 
The simulations are then performed based on the real data of these ten WPPs in Feb. 2004. We note that the WPPs' (Gaussian) probabilistic beliefs are only their crude statistical models of their generation, and all the data used in the actual simulations are real data (as opposed to Gaussianly distributed). This, however, is already sufficient to provide instructive numerical results as will be shown in the remainder of the section. 

To simulate the WPPs' interactions with the DA-RT markets, we employ the hourly DA and RT locational marginal prices (LMPs) in Feb. 2004 from where the ten WPPs are located (all in the PJM interconnection). 
In particular, $p^f$ in \eqref{PayoffSep} is obtained from the hourly day ahead market price $p^{DA} (t)$. To obtain $p^{r,b}$ and $p^{r,s}$, the same approach as in \cite{ZQRGP15} is employed: we let $p^{r,b} = \max \left( 1.2 p^{DA} (t) , 2 p^{RT} (t) \right)$ and $p^{r,s} = \min \left( p^{DA} (t) / 1.2 ,  p^{RT} (t) / 2 \right)$, where $p^{RT} (t)$ is the hourly real time market price. 



We evaluate four different cases of WPPs participating in the DA-RT market, with and without aggregation:  
\begin{itemize}
	\item Case 1: An aggregator employs an efficient and stable/in the core PAM previously derived in \cite{ZQRGP15} to aggregate the WPPs. This PAM assumes the knowledge of the joint probability distribution of the WPPs' random generation. 
	\item Case 2:  An aggregator employs the proposed mechanism (cf. Section \ref{indmec} and \eqref{NewPAM}) to aggregate the WPPs. At the unique NE, 
	each WPP $i$ submits $c_i^{ne}$ (cf. \eqref{closedformNE}). 
	\item Case 3: An aggregator employs the proposed mechanism to aggregate the WPPs. Each WPP $i$ submits the DA commitment that would be optimal had it separately participated in the markets, i.e., $c_i^{\star , sep} = \argmax_{c_i} \pi_i^{sep} \left(c_i\right)$. 
	\item Case 4: Without an aggregator, each WPP $i$ separately participates in the DA-RT market, and makes the optimal DA commitment $c_i^{\star , sep} = \argmax_{c_i} \pi_i^{sep} \left(c_i\right)$. 
\end{itemize}

In particular, for Case 1, the simulated payoff allocation mechanism based on the PAM in [9] is given by 

\begin{align}  \label{ex_post_Zhao_2015}
\mc{P}_i^{ \mbox{ case }1} =  p_i^* x_i + \frac{\mc{P}_{\mc{N}} - \sum_{j=1}^N p_j^* x_j}{N}, 
\end{align}
where $p_i^*$ is the ``competitive price'' given by eq. (15) in \cite{ZQRGP15}. 


\subsection{Simulation Results}

\begin{table}[t]
	\centering
	\caption{Total payoff of all the WPPs}
		\small
	\begin{tabular}[l]{| p{2.0cm} | p{1.8cm} | p{1.5cm} | p{1.4cm} |}
		\hline
		 & Cases 1 and 2 & Case 3 & Case 4 \\
		\hline
		 Total Payoff (\$) & $10,428,257$  &  $10,352,581$ & $9,148,024$  \\
		\hline
	\end{tabular}
\label{table:Table1}
\normalsize
\end{table}

The total payoffs of all the WPPs for the four cases are summarized in Table. \ref{table:Table1}. As expected from Corollary \ref{coroeff}, the total payoffs for Cases 1 and 2 are the same since both cases achieve \emph{efficiency, i.e., maximum expected profit} for the aggregation. In comparison, since Case 3 does not achieve efficiency for the aggregation, a lower total payoff is achieved than that in Cases 1 and 2. Lastly, Cases 1, 2, and 3 all achieve significantly higher total payoffs than Case 4, demonstrating the benefit of aggregating the WPPs. 

Breaking down the total payoffs across WPPs and hours, a) the daily average payoffs of the WPPs are shown in Figure \ref{WPPpayoffs}, and b) the hourly average payoffs of the aggregation are shown in Figure \ref{Hourpayoffs}. It is observed that the payoffs in Cases 1 and 2 are consistently higher than that in Case 3, which are further always higher than that in Case 4. 
For all the WPPs, individual rationality is confirmed. 

\begin{figure}[t!]
	\centering
	\includegraphics[scale=0.40]{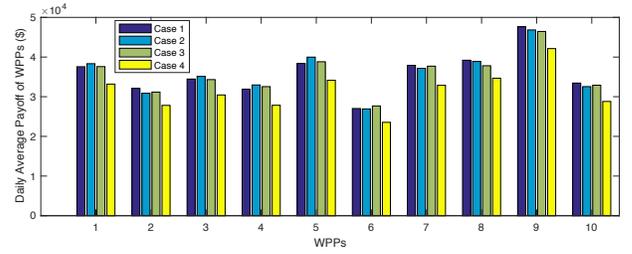}
	\caption{Comparison of the daily average payoffs of the WPPs.}
	\label{WPPpayoffs}
\end{figure}

\begin{figure}[t!]
	\centering
	\includegraphics[scale=0.48]{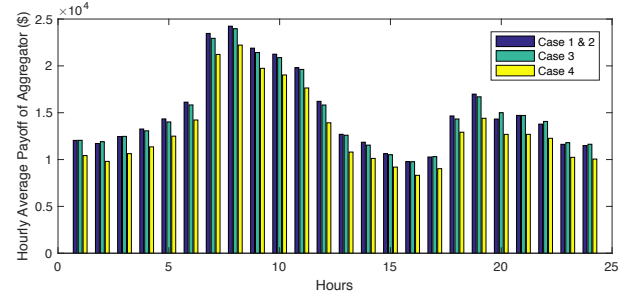}
	\caption{Comparison of the hourly average payoffs of the aggregation.}
	\label{Hourpayoffs}
\end{figure}

Furthermore, we plot the real-time payoff traces of one of the ten WPPs (\#$05711$) in Figure \ref{traces}. 
\begin{itemize}
\item In Figure \ref{exantetrace}, we compare Case 2 against Case 4: it is observed that, while for most of the times the WPP earns a higher payoff in Case 2, there are a small number of hours (e.g., hour \#144 and \#189) in which the WPP earns a higher payoff in Case 4. This is not unexpected for two reasons: a) Case 2 achieves individual rationality \emph{in expectation} (cf. Section \ref{DesiredPropert}), and therefore does not preclude some realized scenarios in which Case 4 turns out better, and b) The WPPs and the aggregator only employ a crude Gaussian model for the forecast errors in our simulations (cf. Section \ref{simmodel}), and thus the WPPs' payoffs determined accordingly can potentially deviate from the ideal ones had the ``ground truth'' probability distributions are employed. 
\item In Figure \ref{exposttrace}, we compare Case 3 against Case 4: it is observed that for \emph{all times} the WPP earns a higher payoff in Case 3. This is consistent with the \emph{ex-post} restricted individual rationality (cf. Section \ref{expostpropert}), in particular because each WPP's in Case 3 makes the DA commitments \emph{in the same way} as in Case 4 ($c_i^{\star , sep}$). 
\end{itemize}
Similar observations have been made for the realized real-time payoff traces for the \emph{entire aggregation}: a) In Case 2, \emph{for most but not all times}, the aggregator's total payoff is higher than that in Case 4, and b) In Case 3, \emph{for all times} the aggregator's total payoff is higher than that in Case 4. In light of these, it is worth re-emphasizing that \emph{efficiency, i.e., maximum expected profit} for the aggregation is in fact achieved in Case 2, but not in Case 3 (cf. Table \ref{table:Table1}), even though for some realized scenarios the payoffs in Case 2 are worse than Case 3. 

\begin{figure}[t!]
	\centering
	\subfigure[]{
	\includegraphics[scale=0.44]{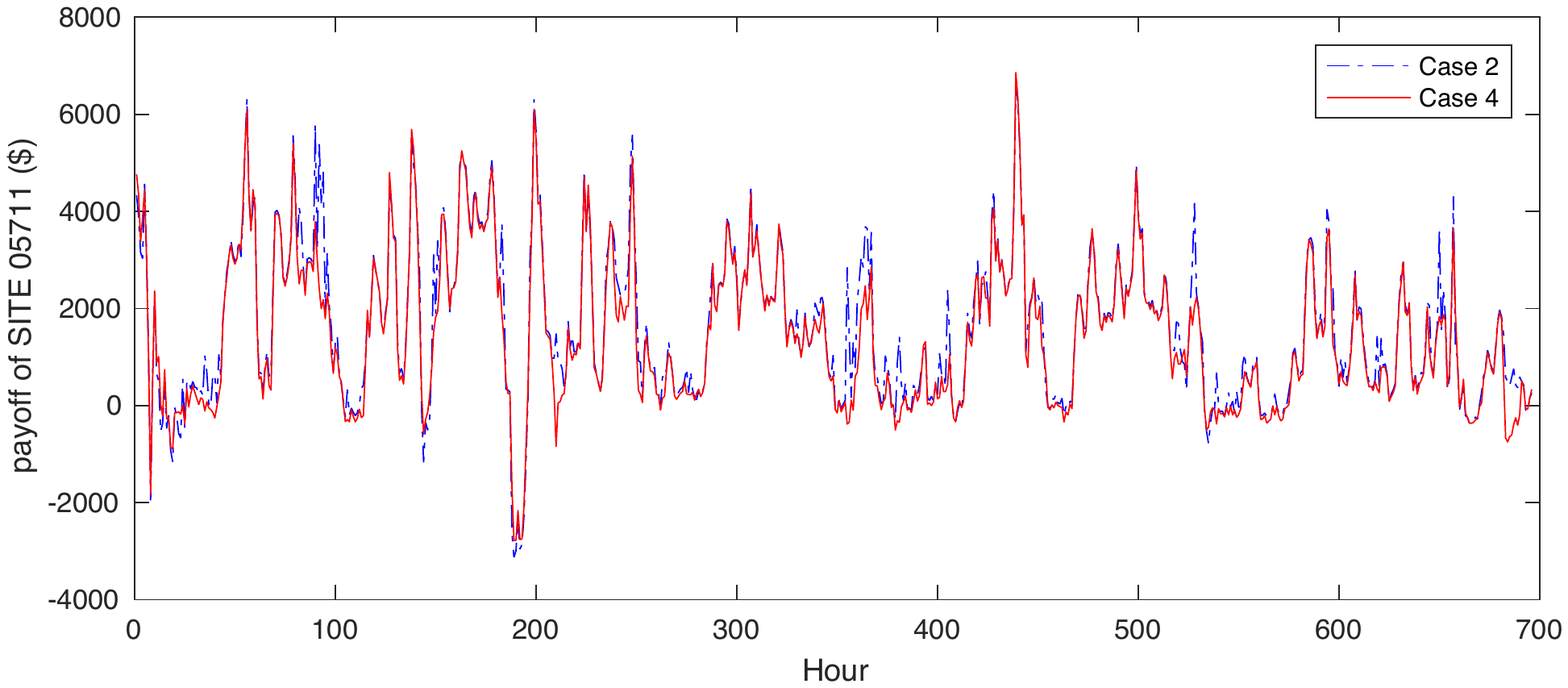} 
	\label{exantetrace}
	}
	\subfigure[]{
	\includegraphics[scale=0.44]{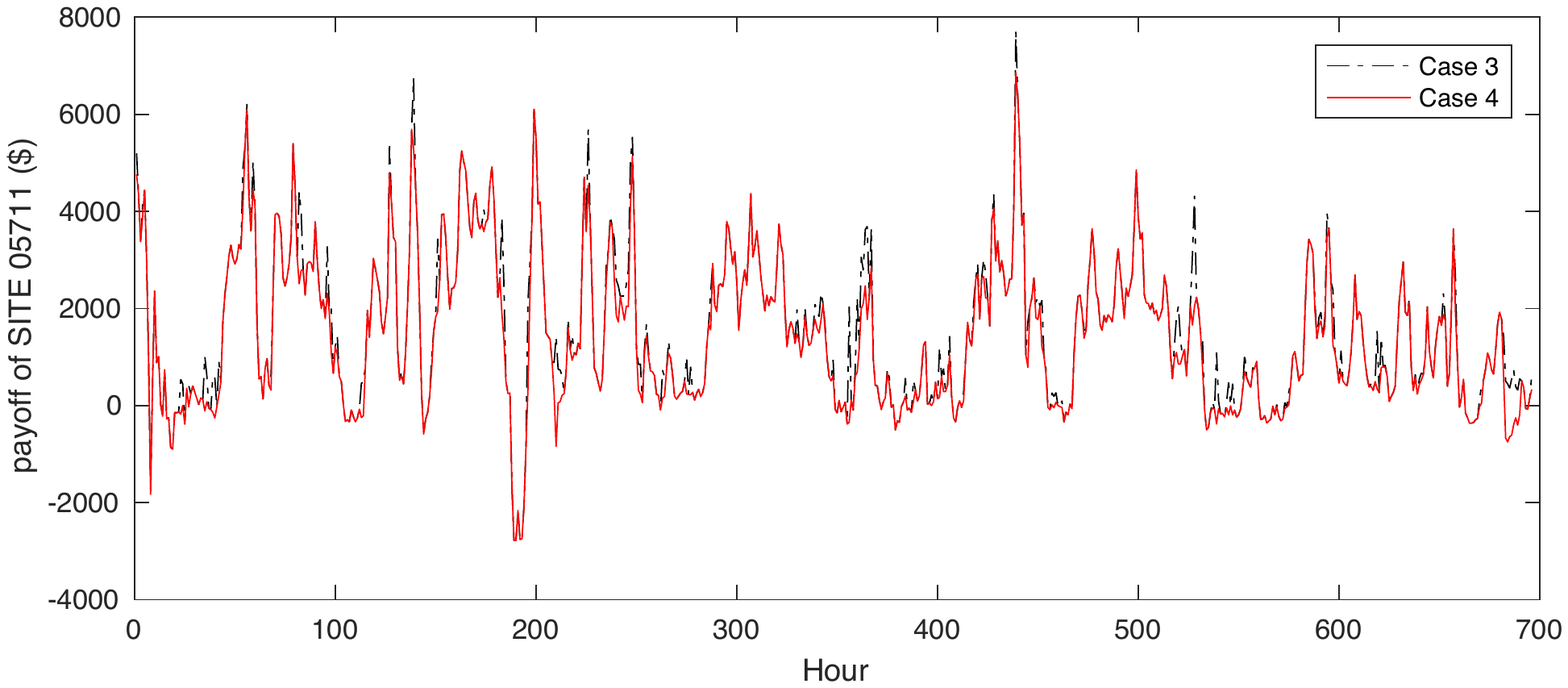}
	\label{exposttrace}
	}
	\caption{The payoffs traces over time of a WPP.}
	 \label{traces}
\end{figure}

\section{Conclusion}    \label{Concl}
An indirect mechanism design framework is employed for aggregating renewable power producers in a two settlement power market. We have designed a payoff allocation mechanism by solving the competitive equilibrium of a specially formulated market with transferrable payoff. 
We have proved that the outcome of the designed mechanism is predicted by a unique Nash equilibrium among the RPPs participating in the aggregation, characterized in closed from. Moreover, at this NE induced by the mechanism, the entire aggregation achieves efficiency, i.e., the maximum expected profit as if all the RPPs fully cooperate. This implies an ideal ``Price of Anarchy'' of one. We have then proved that the NE is ``in the core'', and is hence stable from a coalitional game theoretic perspective. Furthermore, we have proved that the NE guarantees no collusion among the RPPs. In addition, a set of ``ex-post'' properties are also achieved by the designed mechanism. We have simulated the designed mechanism with data from 10 wind power producers in the PJM interconnection. Numerical results consistent with theoretical predictions are observed. 

\section*{Acknowledgment}
The authors would like to thank the anonymous reviewers for their helpful comments and suggestions. 

\ifCLASSOPTIONcaptionsoff
\newpage
\fi

\bibliographystyle{IEEEtran}
{\bibliography{TPS}}



\appendices

\section{Proof of Lemma \ref{lemexprof} }  \label{AppndxExcssPayoffOfAggrgtr} 
First, we have that
\begin{align}   \label{ExcssProfitSumX}
&\mc{P}_i^{sep} = \begin{cases}
p^f c_i + p^{r,s} \left(x_i - c_i\right),       & \quad \text{if }  i \in \mc{S}^+ \\
p^f c_i - p^{r,b} \left(c_i - x_i\right).        & \quad \text{if }  i \in \mc{S}^- \\
\end{cases}   \nonumber  
\end{align}
As a result, 
\begin{align}
&\sum_{i \in \mc{N}} \mc{P}_i^{sep} = \sum_{i \in \mc{S}^+} \mc{P}_i^{sep} + \sum_{i \in \mc{S}^-} \mc{P}_i^{sep}  \nonumber \\
&=p^f \left(c_{\mc{S}^+} + c_{\mc{S}^-}\right) - p^{r,b} \left( c_{\mc{S}^-} - x_{\mc{S}^-} \right) + p^{r,s} \left( x_{\mc{S}^+} - c_{\mc{S}^+} \right) 
\end{align}
We now consider the case of
$x_{\mc{S}^+} - c_{\mc{S}^+}  \geq  c_{\mc{S}^-} - x_{\mc{S}^-}$, i.e., there is an excess power in total in the aggregation. In this case, 
\begin{align}   \label{ExcssProfitAgg1X}
&\mc{P}_{\mc{N}} = p^f \left(c_{\mc{S}^+} + c_{\mc{S}^-}\right) + p^{r,s} \left(x_{\mc{S}^+} + x_{\mc{S}^-} - c_{\mc{S}^+} - c_{\mc{S}^-}\right)
\end{align}
From (\ref{ExcssProfitSumX}) and (\ref{ExcssProfitAgg1X}), we have:
\begin{align}
&\mc{P}_{\mc{N}} - \sum_{i \in \mc{N}} \mc{P}_i^{sep} = (p^{r,b} - p^{r,s}) \left(c_{\mc{S}^-} - x_{\mc{S}^-}\right) \nonumber \\
&= \left(p^{r,b} - p^{r,s}\right) \min \left( \left(x_{\mc{S}^+} - c_{\mc{S}^+}\right) , \left(c_{\mc{S}^-} - x_{\mc{S}^-}\right) \right)
\end{align}

The case when $x_{\mc{S}^+} - c_{\mc{S}^+}  <  c_{\mc{S}^-} - x_{\mc{S}^-}$ can be proved similarly.

%
%
%

\section{Proof of Theorem \ref{thmNE}} \label{thm1prf}

From the best responding condition of NE \eqref{defNE}, a necessary condition for a set of DA commitments $\{c_i^{ne}\}$ to be at a pure NE is 
\begin{align}   \label{eq:zeroderiv}
\frac{d  \pi_i}{d  c_i} \Big|_{\{c_i\} = \{c_i^{ne}\}}  =  0,  ~ \forall i=1,\ldots,N.  
\end{align}
Given the proposed PAM, with a little algebra, the above derivative can be expressed as 
		\begin{align}  
		&\frac{d  \pi_i}{d  c_i} = -\left(p^{r,b} - p^{r,s}\right)  c_i  f_{X_{\mc{N}}}  \left(x_{\mc{N}} =   \sum_{j \in \mc{N}} c_j\right)  \nonumber \\
		& \hspace{4pt} + \left(p^{r,b} - p^{r,s}\right)  \int_{0}^{\sum_{j \in \mc{N} } c_j} \hspace{-10pt}  x_i f_{X_{\mc{N}}, X_i} \left( x_{\mc{N}} = \sum_{j \in \mc{N}} c_j, x_i \right) d x_i  \nonumber \\
		& \hspace{4pt} + \left(p^f - p^{r,s}\right)  - \left(p^{r,b} - p^{r,s}\right) \mc{F}_{\mc{N}} \left( x_{\mc{N}} = \sum_{j \in \mc{N}} c_j \right). \label{dpi_i}
		\end{align}
With \eqref{dpi_i}, the sum of all the $N$ equations \eqref{eq:zeroderiv} simplifies as
		\begin{align}   \label{eq:sum}
		0 & = \sum_{i \in \mc{N}}  \frac{d  \pi_i}{d  c_i}\Big|_{\{c_i\} = \{c_i^{ne}\}}   \nonumber   \\
		& = N \times \left( \left(p^f - p^{r,s}\right) - \left(p^{r,b} - p^{r,s}\right) \mc{F}_{\mc{N}} \left( \sum_{j \in \mc{N}} c_j^{ne}  \right)  \right)
		\end{align}
From \eqref{eq:sum}, we obtain that 
\begin{align}
c_\mc{N}^{ne} = \sum_{j \in \mc{N}} c_j^{ne} = \mc{F}_{\mc{N}}^{-1} \left(\frac{p^f - p^{r,s}}{p^{r,b} - p^{r,s}}\right) = c_{\mc{N}}^{\star}, 
\end{align}
which in fact proves the \emph{efficiency} of the NE (cf. Corollary \ref{coroeff}). 

Now, substituting $\sum_{j \in \mc{N}} c_j= c_{\mc{N}}^{\star}$ into \eqref{dpi_i} and \eqref{eq:zeroderiv}, we get the unique solution 
\begin{align}
c_i^{ne} & =  \frac{\int_{0}^{c_{\mathcal{N}}^{\star}}  x_i f_{X_{\mc{N}}, X_i} \left( x_{\mc{N}} = c_{\mathcal{N}}^{\star}, x_i \right) d x_i}{  f_{X_{\mc{N}}}  \left(x_{\mc{N}} =  c_{\mathcal{N}}^{\star}\right)}  \nn\\
& = \mathbb{E} \left[ X_i  \Big| X_{\mathcal{N}} = c_{\mathcal{N}}^{\star} \right]. 
\end{align}

{ 
\section{Proof of Theorem \ref{conditionNE}}  \label{AppdxProofconditionNE}
From Theorem \ref{thmNE}, if a pure NE exists, it must take the form of \eqref{closedformNE}. In this proof, we show that
\begin{enumerate}
\item If condition \eqref{conditionNEeq} holds, then \eqref{closedformNE} is indeed a pure NE, and 
\item If condition \eqref{conditionNEeq} does not hold, then there exists a set of DA and RT prices such that a pure NE does not exist. 
\end{enumerate}

\emph{Part 1): } Suppose condition \eqref{conditionNEeq} holds. 

From Theorem \ref{thmNE}, the unique candidate for a pure NE is given by $c_i^{ne}  =    \mathbb{E} \left[ X_i  \Big| X_{\mathcal{N}} = c_{\mathcal{N}}^{\star} \right],\forall i \in \mc{N}$ (cf. \eqref{closedformNE}). The expected payoff of RPP $i$ at this candidate pure NE is 
\begin{align}   \label{ExAntPayoffPAM_NE}
 &\pi_i \left(c_i^{ne} , \left\{c_{-i}^{ne}\right\}\right) =   p^{r,s} \cdot \mu_i  \ + \nonumber \\
 &\left(p^{r,b} - p^{r,s}\right) \cdot \hspace{-3pt}  \int_{x_{\mc{N}}=0}^{c_{\mc{N}}^{\star}} \hspace{-3pt} \mathbb{E}_{X_i | X_{\mc{N}}} \left[X_i | X_{\mc{N}} = x_{\mc{N}}\right] \cdot f_{X_{\mc{N}}} \left(x_{\mc{N}}\right) d x_{\mc{N}}
\end{align}
For the strategy profile $\left\{c_{i}^{ne}\right\}$ to indeed be a pure NE, we must also have: 
\begin{align}  \label{condition_appnx}
\forall i \in \mc{N}, \forall c_i \in \mathbb{R}, \hspace{5pt}  \pi_i \left(c_i^{ne} , \left\{c_{-i}^{ne}\right\}\right) \geq \pi_i \left(c_i , \left\{c_{-i}^{ne}\right\}\right), 
\end{align}
where $\pi_i \left(c_i , \left\{c_{-i}^{ne}\right\}\right)$ is the expected payoff of RPP $i$ if it chooses $c_i$ as its strategy (\textit{i.e.}, its DA-commitment), and the other RPPs choose the strategies $\left\{c_{-i}^{ne}\right\}$. $\pi_i \left(c_i , \left\{c_{-i}^{ne}\right\}\right)$ can then be expressed in closed form as follows:
\begin{align}  \label{ExAntPayoffPAM}
&\pi_i \left(c_i , \left\{c_{-i}^{ne}\right\}\right) = \nonumber \\
& p^{r,s} \cdot \mu_i + \left(p^f - p^{r,s}\right) \cdot c_i   + \left(p^{r,b} - p^{r,s}\right) \cdot \bigg\{  -c_i \cdot \mc{F}_{\mc{N}} \left(c_{\mc{N}}\right)  \nonumber  \\
&  \hspace{10pt} + \int_{x_{\mc{N}}=0}^{c_{\mc{N}}} \mathbb{E}_{X_i | X_{\mc{N}}} \left[X_i | X_{\mc{N}} = x_{\mc{N}}\right] \cdot f_{X_{\mc{N}}} \left(x_{\mc{N}}\right) d x_{\mc{N}}\bigg\} 
\end{align}
where $c_{\mc{N}} = c_i + \sum_{j\ne i} c_j^{ne}$. 
Substituting \eqref{ExAntPayoffPAM_NE} and \eqref{ExAntPayoffPAM}, we have that \eqref{condition_appnx} is equivalent to
\begin{align}   \label{NE_necessary_simplified}
&\forall i \in \mc{N}, \forall c_i \in \mathbb{R}, \nonumber \\
& \int_{x_{\mc{N}}=c_{\mc{N}}^{\star}}^{c_{\mc{N}}}  \hspace{0pt}  \bigg\{\mathbb{E}_{X_i | X_{\mc{N}}} \left[X_i | X_{\mc{N}} = x_{\mc{N}}\right] - c_i\bigg\} \hspace{0pt} \cdot f_{X_{\mc{N}}} \left(x_{\mc{N}}\right) d x_{\mc{N}}  \nonumber  \\
& \hspace{190pt}  \leq 0.
\end{align}

From Theorem \ref{thmNE}, when $c_i = c_i^{ne}$, we have that $\mathbb{E}_{X_i | X_{\mc{N}}} \left[X_i | X_{\mc{N}} = x_{\mc{N}}\right] - c_i = 0$, and the left hand side (LHS) of \eqref{NE_necessary_simplified} equals to zero. 
Now, from the condition \eqref{conditionNEeq}, 
\begin{itemize}
\item If $c_i > c_i^{ne}$, then $c_{\mc{N}} > c_{\mc{N}}^{\star}$, and $\mathbb{E}_{X_i | X_{\mc{N}}} \left[X_i | X_{\mc{N}} = x_{\mc{N}}\right] - c_i \le 0, \forall x_\mc{N} \in[c_{\mc{N}}^{\star}, c_{\mc{N}} ]$. Thus, \eqref{NE_necessary_simplified} holds. 
\item If $c_i < c_i^{ne}$, then $c_{\mc{N}} < c_{\mc{N}}^{\star}$, and $\mathbb{E}_{X_i | X_{\mc{N}}} \left[X_i | X_{\mc{N}} = x_{\mc{N}}\right] - c_i \ge 0, \forall x_\mc{N} \in[c_{\mc{N}}, c_{\mc{N}}^{\star}]$. Thus, \eqref{NE_necessary_simplified} holds. 
\end{itemize}

\emph{Part 2): } Suppose condition \eqref{conditionNEeq} does not hold. In other words, there exists an RPP $k$, for some $\alpha$, $\frac{d \mathbb{E}_{X_k | X_{\mc{N}}} \left[X_k | X_{\mc{N}} = \alpha\right]}{d \alpha} > 1$. 

Assuming continuity of $\mathbb{E}_{X_k | X_{\mc{N}}} \left[X_k | X_{\mc{N}} = \alpha\right]$ as a function of $\alpha$, there exists an interval $\left[D_1 , D_2\right]$, where $\frac{d \mathbb{E}_{X_k | X_{\mc{N}}} \left[x_k | X_{\mc{N}} = \alpha\right]}{d \alpha} > 1$ for all $\alpha\in[D_1,D_2]$. 

Now, under the mild technical condition that $\mc{F}_{\mc{N}}$ is invertible, we can always find a set of prices $\left\{p^f, p^{r,b}, p^{r,s}\right\}$ that satisfy 
\begin{align}  \label{pricesProof}
c_{\mc{N}}^{\star} = \mc{F}_{\mc{N}}^{\hspace{1pt} -1} \left(\frac{p^f - p^{r,s}}{p^{r,b} - p^{r,s}}\right) = D_1. 
\end{align}
We now examine, for this RPP $k$, any $c_k \in \left(c_k^{ne}, c_k^{ne} + D_2 - D_1 \right)$. Note that, for $c_k = c_k^{ne}$, the LHS of \eqref{NE_necessary_simplified} is zero; and for any $c_k \in \left(c_k^{ne}, c_k^{ne} + D_2 - D_1 \right)$, the LHS of \eqref{NE_necessary_simplified} is positive. Therefore, 
under this particular set of prices, the unique candidate of a pure NE is \emph{not} a pure NE, and thus a pure NE does not exist. 
}

\section{Deriving the Payoff Allocation Mechanism from a Competitive Equilibrium} \label{sec:market}
In this section, we show that the proposed PAM \eqref{NewPAM} can in fact be derived from computing the competitive equilibrium of a specially formulated market with transferrable payoff. 
This also offers a proof of Theorem \ref{expostthm}. 

\subsection{Market with Transferrable Payoff} \label{MTP}
We first define the following market with transferrable payoff \cite{GameBook}: 
\begin{itemize}
	\item The RPPs, denoted by $\mathcal{N}$, are a finite set of $N$ agents. 
	\item There is one type of input goods --- power generation. 
	\item Each agent $i \in \mathcal{N}$ has an ``endowment'' in the amount of $x_i \in \mathbb{R}_+$ --- the realized power of RPP $i$. 
	\item Each agent $i \in \mathcal{N}$ has a continuous, nondecreasing, and concave ``production'' function $f_i: \mathbb{R}_+ \rightarrow \mathbb{R}$:
	\begin{align}
	f_i(x_i) = \mathcal{P}_i^{sep} = p^f c_i - p^{r,b} \left(c_i - x_i\right)_+ + p^{r,s} \left(x_i - c_i\right)_+. \label{prodfun}
	\end{align}
\end{itemize}
Since all the ``production'' functions $\{f_i\}$ produce the same type of transferrable output, i.e., monetary payoff, the above formulation precisely defines a market with transferrable payoff. 

Next, a coalitional game can be defined based on a market with transferrable payoff \cite{GameBook}. Specifically, for any coalition of a subset of RPPs $\mathcal{T}\subseteq\mc{N}$, define 
\begin{align} \label{CoalValueFunMarketGame}
v \left(\mc{T}\right) = &\max_{\{z_i\in \mathbb{R}_+, i \in \mc{T}\}} ~ \sum_{i \in \mc{T}}f_i(z_i) \\
 &s.t.~ \sum_{i \in \mc{T}}{z_i} = \sum_{i \in \mc{T}}{x_i}. \nn
\end{align}
In other words, $\{z_i, i\in\mc{T}\}$ denotes a \emph{redistribution} of the total realized power $\sum_{i\in\mc{T}} x_i$ among the members of $\mc{T}$. This $v(\mc{T})$ represents the \emph{maximum} total payoff that the members of $\mc{T}$ can achieve among all possible redistributions, computed according to $f_i$ defined in \eqref{prodfun}. 
The core of this coalitional game is also called the ``core of the market''. 

We now prove that this coalitional game is exactly the same as the coalitional game defined previously in \eqref{ValOfCoal}. 
\begin{lemma} \label{equgame}
The values of coalitions \eqref{CoalValueFunMarketGame} are the same as \eqref{ValOfCoal}.
\end{lemma}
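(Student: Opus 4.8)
The plan is to evaluate the optimization \eqref{CoalValueFunMarketGame} directly and show that its optimal value equals the closed-form expression \eqref{ValOfCoal}. First I would substitute the production functions \eqref{prodfun} into the objective and separate out the term $p^f \sum_{i\in\mc{T}} c_i = p^f \hat{c}_\mc{T}$, which is a constant independent of the decision variables $\{z_i\}$. This reduces the problem to maximizing the real-time contribution $\sum_{i\in\mc{T}} \left[ p^{r,s}(z_i - c_i)_+ - p^{r,b}(c_i - z_i)_+ \right]$ over the feasible set $\{z_i \ge 0 : \sum_{i\in\mc{T}} z_i = x_\mc{T}\}$, where $x_\mc{T} = \sum_{i\in\mc{T}} x_i$.

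The key observation is that each $f_i$ is concave and piecewise linear in $z_i$, with the \emph{same} pair of slopes for every $i$: the slope is $p^{r,b}$ below the kink $z_i = c_i$ and $p^{r,s}$ above it, with $p^{r,b} \ge p^{r,s}$ by the no-arbitrage assumption. The central step is therefore an exchange argument showing that, at an optimal redistribution, it is never beneficial to have one member in surplus ($z_i > c_i$) while another is in shortfall ($z_j < c_j$): shifting an infinitesimal unit from the surplus member to the shortfall member changes the objective by $p^{r,b} - p^{r,s} \ge 0$, so the optimum can be attained with no simultaneous surplus and shortfall. Consequently the optimal real-time contribution equals $p^{r,s}(x_\mc{T} - \hat{c}_\mc{T})_+ - p^{r,b}(\hat{c}_\mc{T} - x_\mc{T})_+$, and adding back $p^f \hat{c}_\mc{T}$ recovers exactly \eqref{ValOfCoal}.

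To make this rigorous I would split into two cases according to the sign of $x_\mc{T} - \hat{c}_\mc{T}$ and exhibit an explicit optimizer in each. If $x_\mc{T} \ge \hat{c}_\mc{T}$ (coalition surplus), any feasible $\{z_i\}$ with $z_i \ge c_i$ for all $i$ yields objective $p^f \hat{c}_\mc{T} + p^{r,s}(x_\mc{T} - \hat{c}_\mc{T})$; if $x_\mc{T} < \hat{c}_\mc{T}$ (coalition shortfall), any feasible $\{z_i\}$ with $z_i \le c_i$ for all $i$ yields $p^f \hat{c}_\mc{T} - p^{r,b}(\hat{c}_\mc{T} - x_\mc{T})$. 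Such redistributions exist because $x_\mc{T} \ge 0$ and the $c_i$ are nonnegative, so the total realized power can always be allocated without violating $z_i \ge 0$. The exchange argument then certifies that each stated value is optimal, and both cases match \eqref{ValOfCoal}. I expect the exchange/concavity step to be the only substantive obstacle; the two-case evaluation and the feasibility verification are routine.
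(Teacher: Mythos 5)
Your proof is correct, and its endgame coincides with the paper's: split on the sign of $x_\mc{T} - \hat{c}_\mc{T}$ and exhibit an explicit redistribution --- all members weakly in surplus, respectively all weakly in shortfall --- whose total payoff is $p^f\hat{c}_\mc{T} + p^{r,s}\left(x_\mc{T}-\hat{c}_\mc{T}\right)$, respectively $p^f\hat{c}_\mc{T} - p^{r,b}\left(\hat{c}_\mc{T}-x_\mc{T}\right)$; the paper's construction ($z_i = c_i$ on $\mc{T}^+$ and $x_i \le z_i \le c_i$ on $\mc{T}^-$) is one instance of your feasible family. Where you genuinely differ is the certification of optimality, i.e., the direction $\eqref{CoalValueFunMarketGame} \le \eqref{ValOfCoal}$: the paper dispatches it in one line by observing that \eqref{ValOfCoal} is the maximum the coalition can earn after pooling (in effect invoking the nonnegativity of the excess payoff, Lemma \ref{lemexprof}, applied to the redistributed endowments $\{z_i\}$), whereas you prove it from scratch via an exchange argument using that each $f_i$ is concave and piecewise linear with the \emph{common} slope pair $p^{r,b} \ge p^{r,s}$. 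Your route buys self-containedness --- the lemma no longer leans on the earlier excess-payoff computation --- at the cost of one rigor obligation you correctly flag: the exchange step should be run as ``an optimum exists on the compact feasible set; at any feasible point with a strict-surplus member $i$ and strict-shortfall member $j$, shifting mass $\min\left(z_i - c_i,\, c_j - z_j\right)$ from $i$ to $j$ weakly improves the objective and strictly reduces the number of mixed pairs,'' so that finitely many exchanges reach an all-surplus or all-shortfall point without crossing kinks mid-shift. A still shorter certificate in the same spirit as your slope observation: concavity gives $f_i(z_i) \le p^f c_i + p\left(z_i - c_i\right)$ for each $p \in \{p^{r,s}, p^{r,b}\}$, and summing over $i \in \mc{T}$ under the constraint $\sum_{i\in\mc{T}} z_i = x_\mc{T}$ and taking the smaller of the two resulting bounds yields exactly \eqref{ValOfCoal} for every feasible redistribution, eliminating the exchange argument altogether.
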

\begin{proof}
Straightforwardly, $\eqref{ValOfCoal} \ge  \eqref{CoalValueFunMarketGame}$ because \eqref{ValOfCoal} is the maximum achievable payoff by the subset $\mc{T}$ after their aggregation. Next, we show that \eqref{ValOfCoal} can be achieved by \eqref{CoalValueFunMarketGame}, i.e., $\eqref{ValOfCoal} \le  \eqref{CoalValueFunMarketGame}$. 

We define ${\mc{T}}^+ \triangleq \left\{ i \in {\mc{T}} ~|~ x_i - c_i \geq 0 \right\}$ and $\mc{T}^- \triangleq \left\{ i \in {\mc{T}} ~|~ x_i - c_i < 0 \right\}$. The intuition of a redistribution $\{z_i\}$ to achieve \eqref{ValOfCoal} is the following: We give as much of the \emph{excess} power of the RPPs in $\mc{T}^+$ as possible to the RPPs in $\mc{T}^-$ to offset their \emph{deficit} power. 

Specifically, 
if $x_{\mc{T}} - c_{\mc{T}} < 0$, i.e., $\sum_{i\in\mc{T}^-} \left(c_i - x_i\right) > \sum_{i\in\mc{T}^+} \left(x_i - c_i\right)$, we let 
\begin{align}
\forall i\in\mc{T}^+,& ~z_i = c_i, \label{zplus}\\
\forall i\in\mc{T}^-,& ~ x_i\le z_i \le c_i, \nn\\
&\mbox{ so that } \sum_{i\in\mc{T}^-} \left(z_i - x_i\right) = \sum_{i\in\mc{T}^+} \left(x_i - z_i\right). \label{zminus}
\end{align}
As a result, 
\begin{align}
\sum_{i \in \mc{T}}f_i(z_i) &= \sum_{i \in \mc{T}^+}f_i(z_i) + \sum_{i \in \mc{T}^-}f_i(z_i) \nn\\
&= \sum_{i \in \mc{T}^+}p^f c_i + \sum_{i \in \mc{T}^-}\left(p^fc_i - p^{r,b}(c_i-z_i) \right) \nn\\
& =p^f c_{\mc{T}} - p^{r,b}\sum_{i \in \mc{T}^-}\left((c_i-x_i) - (z_i - x_i) \right) \nn\\
& =p^f c_{\mc{T}} - p^{r,b}\left(\sum_{i \in \mc{T}^-}(c_i-x_i) - \sum_{i \in \mc{T}^+}(x_i - c_i) \right) \label{2ndtolast}\\
& = p^f c_{\mc{T}} - p^{r,b}(c_{\mc{T}} - x_{\mc{T}}) =  \eqref{ValOfCoal}, 
\end{align}
where \eqref{2ndtolast} is implied by \eqref{zplus} and \eqref{zminus}. 

The case of $x_{\mc{T}} - c_{\mc{T}} \ge 0$ can be proved similarly. 
\end{proof}

As a result, from the property of market with transferrable payoff (cf. Proposition 264.2 in \cite{GameBook}), we immediately have that this coaltional game has a \emph{non-empty core}. 

Moreover, this formulation as a market enables us to compute a solution in the core by deriving the \emph{competitive equilibrium} (CE) of the market, as follows. 

\subsection{Competitive Equilibrium} 
For the market with transferrable payoff defined in the last subsection, a competitive equilibrium is defined \cite{GameBook} as a price-quantity pair of $p^*\in\mbb{R}_+$ and $\bm{z}^*\in\mbb{R}_+^N$, such that, 
\begin{itemize}
\item[i)]
For each agent $i$, $z_i^*$ solves the following problem: 
\begin{align}  \label{OptZi}
\max_{z_i \in \mathbb{R}_+}{\left( f_i \left(z_i\right) - p^* \left(z_i - x_i\right) \right)}. 
\end{align}
\item[ii)] $\bm{z}^*$ is a redistribution, i.e., $\sum_{i\in\mc{N}}z_i^* = \sum_{i\in\mc{N}}x_i$. 
\end{itemize}
The intuition of a CE is the following: At the price $p^*$, i) to maximize its payoff, each agent $i$ can trade \emph{any} amount of the input (realized power) on the market \emph{without} worrying whether there is enough supply or demand to fulfill its trade request, and ii) collectively, the market  of input supply and demand \emph{still clears}, i.e., the resulting $\bm{z}^*$ from the optimal trades is feasible. 

At a competitive equilibrium $(p^*, \bm{z}^*)$, $p^*$ is called the \emph{competitive price}, and the value of the maximum of \eqref{OptZi} is called the \emph{competitive payoff} of agent $i$. 

We then have the following theorem (cf. Proposition 267.1 in \cite{GameBook}) dictating that all the CEs are in the core.
\begin{theorem} \label{thmCEcore}
Every profile of competitive payoffs in a market with transferable payoff is in the core of the market. 
\end{theorem}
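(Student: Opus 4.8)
The plan is to verify that the competitive payoff profile satisfies the two defining conditions of the core: group rationality $\sum_{i\in\mc{N}}\phi_i = v(\mc{N})$ and coalitional stability $\sum_{i\in\mc{T}}\phi_i \ge v(\mc{T})$ for every $\mc{T}\subseteq\mc{N}$, where I write $\phi_i \triangleq f_i(z_i^*) - p^*(z_i^* - x_i)$ for the competitive payoff of agent $i$ at the competitive equilibrium $(p^*, \bm{z}^*)$. The single structural fact driving the entire argument is that the linear price term $p^*(z_i - x_i)$ \emph{telescopes to zero} whenever it is summed over any group whose total traded quantity equals its total endowment.

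First I would establish group rationality. Summing $\phi_i$ over $\mc{N}$ and invoking market clearing (condition ii, i.e. $\sum_{i\in\mc{N}}(z_i^* - x_i)=0$), the price terms cancel, so $\sum_{i\in\mc{N}}\phi_i = \sum_{i\in\mc{N}} f_i(z_i^*)$. Since $\bm{z}^*$ is itself a feasible redistribution for $\mc{N}$, this is $\le v(\mc{N})$. For the reverse inequality, I would take any feasible redistribution $\{z_i\}$ of $\mc{N}$, add and subtract the (zero-sum) price term, and apply the individual optimality of $z_i^*$ from \eqref{OptZi} term by term to conclude $\sum_i f_i(z_i) \le \sum_i \phi_i$; maximizing the left side over all redistributions gives $v(\mc{N}) \le \sum_i \phi_i$, hence equality.

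Then I would prove coalitional stability by the same device applied to an arbitrary coalition $\mc{T}$. Fixing any feasible redistribution $\{z_i, i\in\mc{T}\}$ with $\sum_{i\in\mc{T}} z_i = \sum_{i\in\mc{T}} x_i$, I write $f_i(z_i) = [f_i(z_i) - p^*(z_i - x_i)] + p^*(z_i - x_i)$ and sum over $\mc{T}$; within-coalition budget balance again annihilates the price term, so $\sum_{i\in\mc{T}} f_i(z_i) = \sum_{i\in\mc{T}} [f_i(z_i) - p^*(z_i - x_i)]$. Each summand is bounded above by its value at $z_i^*$ because $z_i^*\in\mathbb{R}_+$ maximizes \eqref{OptZi}, giving $\sum_{i\in\mc{T}} f_i(z_i) \le \sum_{i\in\mc{T}} \phi_i$. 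Taking the maximum of the left side over all feasible redistributions of $\mc{T}$ yields exactly $v(\mc{T}) \le \sum_{i\in\mc{T}}\phi_i$.

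The argument is short and its only real content is the telescoping observation, so I expect no serious obstacle. The one point requiring care is that individual optimality in \eqref{OptZi} is posed over the full nonnegative orthant for each agent, which is what lets me apply it unchanged to a coalitional redistribution (each $z_i$ still ranges over $\mathbb{R}_+$). It is worth noting that concavity of the $f_i$ plays no role in this competitive-equilibrium-implies-core direction — it is needed only to guarantee that a competitive equilibrium exists in the first place.
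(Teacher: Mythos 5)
Your proof is correct, and it is precisely the canonical argument: the paper does not prove this statement itself but cites Proposition 267.1 of \cite{GameBook}, whose proof is exactly your telescoping-price-term device combined with the individual optimality of $z_i^*$ in \eqref{OptZi}, applied first to the grand coalition (with market clearing giving the feasibility equality) and then to an arbitrary coalition's budget-balanced redistribution. Your closing observations --- that concavity of the $f_i$ is needed only for existence of the equilibrium, not for this direction, and that the $\mathbb{R}_+$ domain of \eqref{OptZi} is what lets individual optimality transfer unchanged to coalitional redistributions --- are also accurate.
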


Accordingly, to find a solution in the core of the market, which is also the core of the coalitional game for aggregating RPPs (cf. Lemma \ref{equgame}), it is sufficient to find a CE in the market with transferrable payoff defined in the last section. 
\vspace{5pt}

\noindent {\bf Deriving the Competitive Equilibrium:}

For the market with transferrable payoff defined in the last subsection, we have the following theorem: 
\begin{theorem} \label{thmCE}
Competitive equilibrium exists, and the competitive payoffs necessarily take the form of the proposed PAM \eqref{NewPAM}. 
\end{theorem}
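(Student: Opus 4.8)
The plan is to reduce the theorem to a direct analysis of the per-agent concave program \eqref{OptZi} together with the market-clearing condition (ii), exploiting the special piecewise-linear structure of the production functions. Observe first that each $f_i$ in \eqref{prodfun}, viewed as a function of $z_i$, is concave and piecewise linear with slope $p^{r,b}$ for $z_i < c_i$ and slope $p^{r,s}$ for $z_i > c_i$ (a single kink at $z_i = c_i$, consistent with $p^{r,s} \le p^{r,b}$). Hence the objective of \eqref{OptZi}, namely $f_i(z_i) - p^*(z_i - x_i)$, is itself concave in $z_i$ with slopes $p^{r,b} - p^*$ and $p^{r,s} - p^*$ on the two pieces.

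First I would solve the per-agent problem as a function of the candidate price $p^*$ using the subdifferential optimality condition $p^* \in \partial f_i(z_i^*)$. This yields: if $p^{r,s} < p^* < p^{r,b}$ the unique maximizer is the kink $z_i^* = c_i$; if $p^* = p^{r,b}$ the maximizing set is $\{z_i \ge 0 : z_i \le c_i\}$; if $p^* = p^{r,s}$ it is $\{z_i \ge c_i\}$; while $p^* > p^{r,b}$ forces $z_i^* = 0$ and $p^* < p^{r,s}$ makes the objective unbounded. This already confines any competitive price to $[p^{r,s}, p^{r,b}]$.

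Next I would pin down $p^*$ via market clearing $\sum_i z_i^* = x_{\mc{N}}$, splitting into three cases by the sign of $x_{\mc{N}} - c_{\mc{N}}$. When $x_{\mc{N}} < c_{\mc{N}}$ (aggregate shortfall), interior prices force $\sum_i z_i^* = c_{\mc{N}} \ne x_{\mc{N}}$, so the only clearing price is $p^* = p^{r,b}$, at which a redistribution with $0 \le z_i^* \le c_i$ and $\sum_i z_i^* = x_{\mc{N}}$ exists because $0 \le x_{\mc{N}} \le c_{\mc{N}}$; symmetrically, $x_{\mc{N}} > c_{\mc{N}}$ forces $p^* = p^{r,s}$ with a clearing $z_i^* \ge c_i$; and $x_{\mc{N}} = c_{\mc{N}}$ is cleared by $z_i^* = c_i$ for every $p^* \in [p^{r,s}, p^{r,b}]$. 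This simultaneously establishes existence (a clearing pair is exhibited in each case) and necessity of the price (no other value can clear), matching the three branches of \eqref{NewPAM} and the free choice of $p^*$ in the equality case.

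Finally I would compute the competitive payoff $\mc{P}_i = f_i(z_i^*) - p^*(z_i^* - x_i)$ in each case and verify it equals $p^f c_i + p^{r,b}(x_i - c_i)$, $p^f c_i + p^*(x_i - c_i)$, or $p^f c_i + p^{r,s}(x_i - c_i)$ respectively, i.e., exactly \eqref{NewPAM}. The main obstacle -- and the only nontrivial point -- is that at the boundary prices $p^{r,b}$ and $p^{r,s}$ the clearing redistribution $\bm{z}^*$ is not unique, so I must show the competitive payoff does not depend on which optimal $z_i^*$ is selected. This follows precisely because on the optimal set the objective $f_i(z_i) - p^*(z_i - x_i)$ is constant (each agent is indifferent), so $\mc{P}_i$ is determined by $p^*$ alone; substituting the piecewise form of $f_i$ then collapses the $z_i^*$-dependence and yields the stated closed form, completing the proof and, via Theorem \ref{thmCEcore} and Lemma \ref{equgame}, the ex-post stability claim of Theorem \ref{expostthm}.
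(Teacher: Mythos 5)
Your proposal is correct and follows essentially the same route as the paper's proof: restrict $p^*$ to $[p^{r,s},p^{r,b}]$ via the piecewise-linear structure of $f_i$, determine the clearing price by cases on the sign of $x_{\mc{N}}-c_{\mc{N}}$, exhibit a clearing redistribution in each case, and read off the competitive payoffs as \eqref{NewPAM}. Your explicit subdifferential bookkeeping and your observation that the competitive payoff is the \emph{optimal value} of \eqref{OptZi} --- hence independent of which maximizer $z_i^*$ is selected at the boundary prices $p^{r,b}$ and $p^{r,s}$ --- merely make precise steps the paper treats as ``immediate to check.''
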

\begin{proof}
With the production function $f_i(x_i)$ defined to be $\mathcal{P}_i^{sep}$ as in \eqref{prodfun}, we observe that $f_i(x_i)$ is a \emph{piecewise linear} function: $f_i'(x_i) = 
\begin{cases}
p^{r,b}, \mbox{ if } x_i < c_i \\
p^{r,s}, \mbox{ if } x_i > c_i
\end{cases}\!\!\!\!. $

As a result, at a CE, we must have $p^{r,b} \le p^* \le p^{r,s}$. Otherwise, by solving \eqref{OptZi}, either all RPPs would sell all of their power, or all of them would buy an infinite amount of power; Neither case would clear the market with $\sum_{i\in\mc{N}}z_i^* = \sum_{i\in\mc{N}}x_i$. 

We now analyze the optimal behavior of any agent $i$ under the following three scenarios of the competitive price $p^*$: 
\begin{itemize}
\item If $p^* = p^{r,b}$, 
the maximum of \eqref{OptZi} is achieved if and only if $z_i \le c_i$. 
\item If $p^* = p^{r,s}$, 
the maximum of \eqref{OptZi} is achieved if and only if $z_i \ge c_i$. 
\item If $p^{r,s} < p^* < p^{r,b}$,  
the maximum of \eqref{OptZi} is achieved if and only if $z_i = c_i$. 
\end{itemize}

To derive the competitive price $p^*$ that clears the market with $\sum_{i\in\mc{N}}z_i^* = \sum_{i\in\mc{N}}x_i$, we consider the following three scenarios: 

\emph{Case i)} $x_\mc{N} - c_\mc{N}<0$: 
As result, at the CE, $\sum_{i\in\mc{N}}z_i^* < \sum_{i\in\mc{N}}c_i^*$. From the above, we \emph{necessarily} have $p^* = p^{r,b}$. Indeed, with $p^* = p^{r,b}$, there exists $\bm{z}^*$ such that a) $z^*_i \le c_i$, and b) $\sum_{i\in\mc{N}}z_i^* = \sum_{i\in\mc{N}}x_i < c_\mc{N}$. 

Moreover, it is immediate to check that the competitive payoff of RPP $i$ equals $p^f c_i +  		p^{r,b} \left(x_i - c_i\right)$ (cf. \eqref{NewPAM}). 

\emph{Case ii)} $x_\mc{N} - c_\mc{N}>0$: 
As result, at the CE, $\sum_{i\in\mc{N}}z_i^* > \sum_{i\in\mc{N}}c_i^*$. From the above, we \emph{necessarily} have $p^* = p^{r,s}$. Indeed, with $p^* = p^{r,s}$, there exists $\bm{z}^*$ such that a) $z^*_i \ge c_i$, and b) $\sum_{i\in\mc{N}}z_i^* = \sum_{i\in\mc{N}}x_i > c_\mc{N}$. 

Moreover, the competitive payoff of RPP $i$ equals $p^f c_i +  		p^{r,s} \left(x_i - c_i\right)$ (cf. \eqref{NewPAM}). 

\emph{Case iii)} $x_\mc{N} - c_\mc{N}=0$: 
In this case, $\forall p^*,~s.t.~p^{r,s} \le p^* \le p^{r,b}$, $z_i^* = c_i, \forall i$ achieves $\sum_{i\in\mc{N}}z_i^* = \sum_{i\in\mc{N}}x_i = c_\mc{N}$. 

Moreover,  the competitive payoff of RPP $i$ equals $p^f c_i +  		p^* \left(x_i - c_i\right)$ (cf. \eqref{NewPAM}).
\end{proof}

From Theorem \ref{thmCEcore} and \ref{thmCE}, we conclude that the competitive payoffs that equal \eqref{NewPAM} are always in the core of the market, and hence the core of the coalitional game \eqref{ValOfCoal}.

\end{document}